\newcommand{\captionfonts}{\footnotesize }
\long\def\@makecaption#1#2{%
  \vskip\abovecaptionskip
  \sbox\@tempboxa{{\captionfonts #1: #2}}%
  \ifdim \wd\@tempboxa >\hsize
    {\captionfonts #1: #2\par}
  \else
    \hbox to\hsize{\hfil\box\@tempboxa\hfil}%
  \fi
  \vskip\belowcaptionskip}
\def\jo#1{\textcolor{black}{#1}}
\def\LEFT{{\mathrm{left}}}
\def\RIGHT{{\mathrm{right}}}
\newtheorem{theorem}{Theorem}
\newtheorem{proposition}{Proposition}
\newtheorem{definition}{Definition}
\newtheorem{remark}{Remark}
\newtheorem{example}{Example}
\newenvironment{proof}{{\bf ~Proof}.\enspace}%
{$\Box$}
\def\ang#1#2{\langle#1\rangle_{#2}\, }
\def\angs#1{\langle#1\rangle\, }
\def\cS{\mbox{Sub}}
\def\cL{\mathcal{L}}
\def\cF{\mathcal{F}}
\def\sem#1{[\hspace{-.35ex}[#1]\hspace{-.35ex}]}
\begin{document}  

\title{Computing Distances between Probabilistic Automata}
\def\titlerunning{Computing Distances between Probabilistic Automata}
\def\authorrunning{Tracol, Desharnais, Zhioua}

\author{
Mathieu Tracol \institute{LRI, Universit\'e Paris-Sud, France}
\and
Jos\'ee Desharnais\thanks{Research supported by CRSNG},
\quad 
Abir Zhioua$^*$
\institute{D\'ep.\ d'informatique et de g\'enie logiciel, Universit\'e Laval, Qu\'ebec, Canada}}
\maketitle

\begin{abstract}
We present relaxed notions of simulation and bisimulation on  Probabilistic Automata (PA), that allow some error $\epsilon$. When $\epsilon=0$ we retrieve the usual notions of bisimulation and simulation on PAs.
We give logical characterisations of these notions by choosing suitable  logics which differ from the elementary ones, $\cL$ and $\cL^\neg$, by the modal operator. 
Using flow networks, we show how to compute the relations in PTIME. This allows the definition of an efficiently computable
non-discounted distance between the states of a PA.  A natural modification of this distance is introduced, to obtain a discounted distance,  which weakens the influence of long
term transitions.  We compare  our notions of distance to others
previously defined and illustrate our approach on various examples.
We also show that our distance is not expansive with respect to process algebra operators.

 Although $\cL^{(\neg)}$ is a suitable logic to characterise $\epsilon$-(bi)simulation on deterministic PAs, it is not for general PAs;     interestingly, we prove that it does characterise  weaker notions, called \emph{a priori $\epsilon$-(bi)simulation}, which we prove to be NP-difficult to decide. 

\end{abstract}

\textbf{Keywords:} Metrics, Bisimulation, Logic, Probabilistic Automata

\section{Introduction}

Preorders and equivalence notions between processes are central to concurrency theory.  One wants to compare terms of a process algebra for proving an axiomatisation sound, to compare processes to some  abstractions of them, etc. For non-probabilistic  processes, notions of bisimulation and simulation are widely acknowledged, with, of course, many variations.
In the study of probabilistic systems
it has been observed~\cite{Giacalone90}  that
 the comparison between processes should not be based on notions that rely strongly on exact numbers, as do  the known notions  of bisimulation and simulation for probabilistic systems.
The most important reason is that the stochastic information in probabilistic processes often comes from \emph{observations}, 
or from \emph{theoretical estimations}.  
  Hence a slight difference
in the probabilities between two processes should be treated differently from important ones and certainly not be simply tagged as non equivalence. 
In this context, notions of approximate equivalence or distance  are more useful.  Distances have been defined for probabilistic processes~\cite{Ferns05,Worrell01} and some have tried to estimate bisimulation with a certain degree of confidence~\cite{Ferns05}.  
  Relaxing the definition of simulation and bisimulation is another avenue, which we follow.
 
 We first extend previous work on deterministic processes~\cite{DesLavTra08}  to their non deterministic version, Probabilistic Automata (PA)~\cite{Segala94}.  We present \emph{relaxed} notions of simulation and bisimulation on  them with respect to some accuracy $\epsilon$. 
When $\epsilon=0$ we retrieve the usual notions of bisimulation and simulation on PAs. Our notions rely on a definition of \emph{$\epsilon$-lifting} of relations, which happens to be equivalent to the one presented in \cite{SegalaT07}. However, in this paper, the authors present different notions of $\epsilon$-simulations which consider \emph{distributions on the set executions}, whereas our relations are always between the \emph{states} of the systems, and our purpose is different. 
We give logical characterisations of these notions:  a state $\epsilon$-simulates another state if and only if it \emph{$\epsilon$-satisfies}  every formula that the other one (exactly) satisfies; similarly for $\epsilon$-bisimulation.  The extension of previous work comprises also the definition of an  efficiently computable
\emph{non-discounted} distance: two states are at distance less than or equal to $\epsilon$ if they are $\epsilon$-bisimilar. 
Using flow networks, we show how to compute in PTIME our relaxed relations
of (bi)simulation which helps to also compute efficiently the distance.

The nature of non determinism leads to new challenges and concepts.  It is not suprising that the logics that we prove to characterise $\epsilon$-bisimulation and $\epsilon$-simulation differ from the elementary ones, $\cL$ and $\cL^\neg$.   Although $\cL^{(\neg)}$ is a suitable logic to characterise $\epsilon$-(bi)simulation on deterministic PAs, it is not for general PAs;     interestingly, we define weaker notions  that it does characterise on PAs, called \emph{a priori $\epsilon$-(bi)simulation}. We also prove that a priori $0$-simulation  is NP-difficult to decide, contrarily to $\epsilon$-bi/simulation. 

We propose a natural modification of our basic distance in order to discount the influence of long
term transitions.  We  illustrate the difference between the values of the two distances on various examples of
two-dimensional grids. 
  Both (pseudo-)distances  are different  from the ones defined in the 
past~\cite{Desharnais99b,Alfaro07,Ferns05,Worrell01}, in that differences along paths are not accumulated, even in the discounted one.   The other known distances all accumulate differences through paths, and most of them discount the future.  Those that do not discount the future are intractable: it has recently been proven decidable~\cite{breugelSW07}, but with double exponential complexity.  Our distance is determined with a polynomial algorithm.

Finally, we prove that our distances are not expansive   with respect to process algebras operators, such as parallel  composition and non-deterministic choice.



\section{Probabilistic Automata and $\epsilon$-relations}

In this section we give the definitions of our models and the relaxed relations that we study.  
\emph{Probabilistic Automata} are labelled transition systems where transitions are from states to distributions and that involve non determinism. We generalize slightly the standard model, allowing sub-distributions instead of distributions, to model non responsiveness of the system and to make simulation a richer notion.
 Given a countable set $S$, we write $\cS(S)$ for the set of sub-distributions on $S$: the total probability out of a sub-distribution may be less than one. 
 Given a relation $R$ on $S\times S$ and $X\subseteq S$,  $R(X)=\lbrace y\in S|\exists x\in X\ s.t.\ xRy\rbrace$.
	 A set $X$ is \emph{$R$-closed} if $R(X)\subseteq X$.
\begin{definition}[PA~\cite{Segala94}]
A \emph{probabilistic automaton}, or PA, is a tuple $\mathcal S=(S,Act,\mathcal{D})$ where $S$ is a denumerable state space, $Act$ is a finite set of actions, and $\mathcal{D}\subseteq S\times Act\times \cS(S)$ is the transition relation. $\mathcal S$ is \emph{finitely branching} if for all $s\in S$ and $a\in Act$, $\{\mu\in\cS(S)\mid (s,a,\mu)\in \mathcal D)\}$ is finite; if it is a singleton or empty, we say that $\mathcal S$ is \emph{deterministic}. The disjoint union  of PAs $\mathcal{S}_1,...,\mathcal{S}_k$ is the  PA $\uplus_{i\in[1;k]} \mathcal S_i$ whose states are the disjoint union of the ${S}_i$ and transitions carry through.

\end{definition}
 Closely related models restrict states to be either probabilistic or non deterministic~\cite{HermannsIMC02}.  Generalizations to uncountable state spaces have also been studied~\cite{CSKN05,DArg09}.  
We sometimes mark a state (or a distribution) as initial. We write $s\stackrel{a}{\rightarrow}\mu$ for a transition $(s,a,\mu)\in\mathcal{D}$. 

An example of PA is given in Fig.~\ref{fig:basicPA}:  an arrow labelled with  action $l$ and value $r$  represents an $l$-transition of probability $r$; in picture representations, 
we omit the distributions that are concentrated in one point, as can be seen for the transitions from $s$ to $s_0$ and from $t$ to itself.  In contrast, state $s$ has an $a$-transition to distribution $\mu_s$ giving $s$ three possible successors for this $a$-transition.
\begin{figure}[t]
$$\xymatrix@R=3mm@C=10mm{
&& s\ar@{-}[d]^{a}\ar[llddd]_{b,\frac{1}{2}}\ar@(dr,ur)[]_{a,1}
&& t \ar@{-}[d]_{a}\ar[rrddd]^{b,\frac{5}{8}}\ar@(dl,ul)[]^{a,1}\\
&& \scriptstyle\mu_s\ar[ddl]_{\frac{1}{8}} \ar[dd]^{\frac{3}{8}} \ar[ddr]^{\frac{4}{8}}  
&& \scriptstyle\mu_t\ar[ddl]_{\frac{2}{8}}\ar[dd]_{\frac{4}{8}}\ar[ddr]^{\frac{2}{8}}\\\\
s_0&     s_1&s_2 \ar[dd]_{b,1}  & x\ar[dd]_{b,1}  &  t_2\ar[dd]^{b,1}&t_1 &t_0\\\\
&     & s_3\ar[r]^{c,1}   & y              &  t_3\ar[l]_{c,1}
  }
$$
\caption{$s\prec_{\frac{1}{8}}t$ and  $t\prec_{\frac{1}{8}}s$ but $s\not\sim_{\frac{1}{8}}t$}
\label{fig:basicPA}
\end{figure}
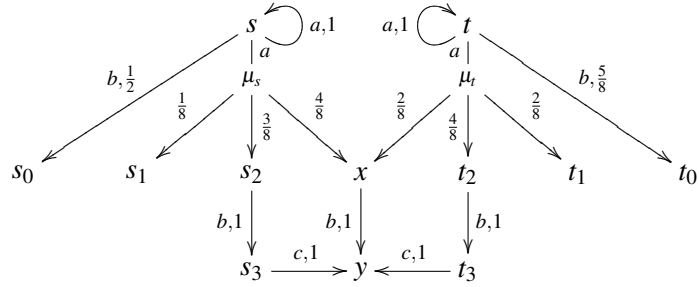


 In previous work~\cite{DesLavTra08}, we relaxed the classical notion of simulation between deterministic PAs to \emph{$\epsilon$-simulation}. We now generalize this approach to the context of PAs. 



\begin{definition}[$\epsilon$-bi/simulation]\label{d:eSim}
Let  $\mathcal{S}=(S,Act,\mathcal{D})$ be a PA, and $\epsilon\geq0$. A relation $R\subseteq S\times S$ is an \emph{$\epsilon$-simulation on $\mathcal{S}$} if whenever $sRt$, if $s\stackrel{a}{\rightarrow}\mu$, then there exists a transition $t\stackrel{a}{\rightarrow}\nu$ such that $\mu\,\mathcal{L}^\epsilon(R)\,\nu$, where 
$$\mu\,\mathcal{L}^\epsilon(R)\,\nu\quad\mbox{ iff \,\,for all }E\subseteq S\mbox{ we have }\mu(E)\leq\nu(R(E))+\epsilon.$$
 If $R$ is  symmetric, it is an  \emph{$\epsilon$-bisimulation}. 
{Two states $s$ and $s'$ of PAs $\mathcal S_1$ and $\mathcal S_2$ are $\epsilon$-similar (resp.~$\epsilon$-bisimilar), written $s\prec_\epsilon s'$  (resp.~ $s\sim_\epsilon s'$), if there is some  $\epsilon$-simulation  (resp.~$\epsilon$-bisimulation) that relates them in $\mathcal S_1\uplus \mathcal S_2$. }

\end{definition}

We may omit $\epsilon$ in the notation when $\epsilon=0$, as it yields the classical notions.  

\begin{example}\label{e:twoway}
  In the PA of Fig.~\ref{fig:basicPA} $s\prec_{\frac{1}{8}}t$ and  $t\prec_{\frac{1}{8}}s$.  This is witnessed by
the relations 
$$\{(s,t),(x,x), (x,t_2),(y,y), (y,t_3)\}\cup \{(s_i,t_i)\mid i=0,1,2,3\},$$
$$\{(t,s),(x,x), (x,s_2),(y,y),(y,s_3)\}\cup \{(t_i,s_i)\mid i=0,1,2,3\}.$$
    However, $s$ and $t$ are not related for any $\epsilon<\frac{1}{8}$.
Notice that we have $s\not\sim_{\frac{1}{8}}t$. Indeed, $x$ is ${\frac{1}{8}}$-bisimilar to no state but itself, but $\mu_s(\left\{x\right\})=\frac{4}{8}$, which is strictly greater than $\mu_t(\left\{x\right\})+\frac{1}{8}=\frac{3}{8}$.  
\end{example}
This example shows that two-way $\epsilon$-simulation is not $\epsilon$-bisimulation, even for deterministic PAs. A deterministic  example is obtained by removing $a$-loops in Fig.~\ref{fig:basicPA}.

\begin{proposition}
$\epsilon$-bisimulation is different from two-way $\epsilon$-simulation.\end{proposition}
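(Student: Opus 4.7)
The plan is to ride on Example~\ref{e:twoway} and the pair $(s,t)$ of Figure~\ref{fig:basicPA}: that paragraph already asserts $s\prec_{1/8} t$, $t\prec_{1/8}s$, and $s\not\sim_{1/8}t$, which is exactly the separation the proposition demands. A proof therefore amounts to a verification of the three claims made in that example.

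For the two $\frac{1}{8}$-simulations I would check the lifting condition $\mu\,\mathcal{L}^{1/8}(R_j)\,\nu$ on each related pair of the two explicitly given relations. For every pair other than $(s,t)$ the relevant transitions on both sides are Dirac, so the lifting collapses to checking that the target is in $R_j$, which is immediate by inspection. The only substantive calculation concerns $(s,t)$: matching $s \stackrel{a}{\to} \mu_s$ against $t\stackrel{a}{\to}\mu_t$ requires a finite case analysis of the inequality $\mu_s(E)\le\mu_t(R_1(E))+\frac{1}{8}$ on subsets of the joint support (and symmetrically for $R_2$). The $a$-self-loops and the $b$-transitions into $s_0$ and $t_0$ match trivially via $(s_0,t_0)\in R_j$.

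To establish $s\not\sim_{1/8}t$ I would first prove that in any symmetric $\frac{1}{8}$-bisimulation $R$ on the automaton one has $R(\{x\})\subseteq\{x\}$. A short enumeration of candidates suffices: the states $s_0,s_1,t_0,t_1,y$ have no outgoing transitions and so cannot match the $b$-transition of $x$; the states $s$ and $t$ carry $a$-transitions that $x$ cannot match; and $s_2,t_2$ fail because their $b$-children $s_3,t_3$ admit a $c$-transition to $y$ whereas $y$ itself has no $c$-transition, so $y$ cannot $\frac{1}{8}$-bisimulate $s_3$ or $t_3$. Once this isolation is in hand, the lifting of the matched $a$-move of $s$ against \emph{any} $a$-move of $t$ fails at the test set $E=\{x\}$: indeed $\mu_s(\{x\})=\frac{4}{8}$, while for either candidate distribution $\nu\in\{\mu_t,\delta_t\}$ on the $t$-side we have $\nu(R(\{x\}))\le\nu(\{x\})\le\frac{2}{8}$, and $\frac{4}{8}>\frac{2}{8}+\frac{1}{8}$. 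The main obstacle is precisely this isolation step; once it is in hand, the rest reduces to a single arithmetic check.
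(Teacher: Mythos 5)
Your plan is the paper's own proof: the proposition is established there entirely by Example~\ref{e:twoway}, and your proposal is a verification of that example. Your elaboration of the second half ($s\not\sim_{\frac{1}{8}}t$) is sound and in fact more complete than the paper's one-line assertion that $x$ is $\frac{1}{8}$-bisimilar only to itself: the isolation-by-enumeration argument is the right justification, and the closing inequality $\frac{4}{8}>\frac{2}{8}+\frac{1}{8}$ is correct. (Minor point: your enumeration omits $s_3$ and $t_3$, which are ruled out just as trivially, since their $c$-transitions cannot be matched by $x$.)

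The genuine problem is in the first half, where you propose to check the two printed relations ``and symmetrically for $R_2$''. The $R_2$ check is not symmetric to the $R_1$ check, and as printed it fails: take $E=\{t_1,t_2\}$. Then $\mu_t(E)=\frac{2}{8}+\frac{4}{8}=\frac{6}{8}$, while $R_2(E)=\{s_1,s_2\}$ and $\mu_s(\{s_1,s_2\})=\frac{1}{8}+\frac{3}{8}=\frac{4}{8}$, so $\mu_t(E)>\mu_s(R_2(E))+\frac{1}{8}$; the slack $\epsilon$ is added only once, not per element, so the two singleton checks that do pass ($E=\{t_1\}$ and $E=\{t_2\}$) do not combine. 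Hence the relation $R_2$ given in Example~\ref{e:twoway} is \emph{not} a $\frac{1}{8}$-simulation, and a faithful execution of your ``finite case analysis'' would grind to a halt there. The claim $t\prec_{\frac{1}{8}}s$ is nevertheless true: enlarge $R_2$ with the pair $(t_1,x)$ --- harmless, since $t_1$ has no outgoing transitions and therefore imposes no matching obligation --- so that $R_2(\{t_1,t_2\})=\{s_1,s_2,x\}$ carries full $\mu_s$-mass, after which every subset of the support passes (the binding case being $E=\{t_2\}$, where equality holds). The fix is local, but you cannot take the printed witness on faith; the two directions genuinely differ because $\mu_s$ and $\mu_t$ spread their mass differently over related states.
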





As for the classical case, we define the largest relations as greatest fixed~points: $F^\epsilon:2^{S\times S}\to 2^{S\times S}$ is defined as follows $\forall R\subseteq S\times S,\ \forall s,t\in S$, let $(s,t)\in F^\epsilon(R)$ iff
$\forall s\stackrel{a}{\rightarrow}\mu,\ \exists t\stackrel{a}{\rightarrow}\nu\ |\ \mu\mathcal{L}^\epsilon(R)\nu$.
Similarly, $G^\epsilon:2^{S\times S}\to 2^{S\times S}$ is defined as $(s,t)\in G^\epsilon(R)$ iff $(s,t),(t,s)\in F^\epsilon(R)$. We then define $\prec_{\epsilon}^n$ iteratively as $\prec_{\epsilon}^0=S\times S$ and for all $n$, $\prec_{\epsilon}^{n+1}=F^\epsilon(\prec^n)$. As well, let $\sim_{\epsilon}^0=S\times S$ and for all $n$, $\sim_{\epsilon}^{n+1}=G^\epsilon(\sim^n)$.
\begin{theorem} $\mathcal{S}$ being finitely branching,
$\prec_\epsilon$ and $\sim_\epsilon$ are the greatest fixpoints of $F^\epsilon$ and $G^\epsilon$, respectively. In other words, $\prec_{\epsilon}=\bigcap_{n\in\mathbb{N}}\prec_{\epsilon}^n,\ \ \mbox{and}\ \ \sim_{\epsilon}=\bigcap_{n\in\mathbb{N}}\sim_{\epsilon}^n$.
\end{theorem}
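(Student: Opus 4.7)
The plan is to run the Knaster--Tarski recipe on the complete lattice $2^{S\times S}$ for the monotone operator $F^\epsilon$, and then to identify $\prec_\epsilon$ with the limit of the descending chain $(\prec_\epsilon^n)_n$ by exploiting finite branching. Monotonicity of $F^\epsilon$ is immediate from the definition of $\mathcal{L}^\epsilon$: if $R\subseteq R'$ then $R(E)\subseteq R'(E)$ for every $E$, so the clause $\mu(E)\le\nu(R(E))+\epsilon$ only weakens under enlarging $R$. By Definition~\ref{d:eSim}, an $\epsilon$-simulation is exactly a post-fixed point $R\subseteq F^\epsilon(R)$; the union of all such relations is itself a post-fixed point by monotonicity, and applying $F^\epsilon$ once more shows this union is in fact a fixed point, hence the greatest fixed point. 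So it equals $\prec_\epsilon$, and the identical argument identifies $\sim_\epsilon$ with the greatest fixed point of $G^\epsilon$.

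For the iterative characterisation, the inclusion $\prec_\epsilon\subseteq\bigcap_n\prec_\epsilon^n$ is routine: a straightforward induction using monotonicity of $F^\epsilon$ and the base case $\prec_\epsilon\subseteq S\times S=\prec_\epsilon^0$ yields $\prec_\epsilon\subseteq\prec_\epsilon^n$ for every $n$. The converse is the main obstacle, which I would resolve by showing that $R^\star:=\bigcap_n\prec_\epsilon^n$ is itself an $\epsilon$-simulation. Fix $(s,t)\in R^\star$ and a transition $s\stackrel{a}{\rightarrow}\mu$. For every $n\ge 1$ there is a matching transition $t\stackrel{a}{\rightarrow}\nu_n$ with $\mu\,\mathcal{L}^\epsilon(\prec_\epsilon^{n-1})\,\nu_n$. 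Since $\{\nu\mid t\stackrel{a}{\rightarrow}\nu\}$ is finite, the infinite pigeonhole principle furnishes a single $\nu^\star$ that appears as $\nu_n$ for infinitely many $n$; combined with the decreasingness of the chain $(\prec_\epsilon^n)_n$ and the evident monotonicity of $\mathcal{L}^\epsilon$ in its argument, this upgrades to $\mu\,\mathcal{L}^\epsilon(\prec_\epsilon^m)\,\nu^\star$ for every $m$.

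It remains to pass to the limit and derive $\mu\,\mathcal{L}^\epsilon(R^\star)\,\nu^\star$, which is where the argument requires care: $R^\star(E)$ can in general be strictly smaller than $\bigcap_m\prec_\epsilon^m(E)$, so continuity of $\nu^\star$ from above cannot be invoked naively on the sequence $\prec_\epsilon^m(E)$. I would circumvent this by first restricting to finite subsets $E'\subseteq E\cap\mathrm{supp}(\mu)$; for such $E'$, a second pigeonhole across its finitely many elements, combined with the decreasingness of the chain, does yield $\bigcap_m\prec_\epsilon^m(E')=R^\star(E')$. Continuity from above of the sub-probability measure $\nu^\star$ then gives $\mu(E')\le\nu^\star(R^\star(E'))+\epsilon\le\nu^\star(R^\star(E))+\epsilon$. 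Letting $E'$ exhaust the countable set $E\cap\mathrm{supp}(\mu)$ and using $\sigma$-additivity of $\mu$ produces the required bound $\mu(E)\le\nu^\star(R^\star(E))+\epsilon$. Hence $R^\star$ is an $\epsilon$-simulation and so $R^\star\subseteq\prec_\epsilon$, closing the inclusion; the same argument, noting that the intersection of symmetric relations is symmetric, handles $\sim_\epsilon$.
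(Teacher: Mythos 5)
The paper states this theorem without giving a proof, so there is nothing to compare against line by line; judged on its own, your argument is correct and is the expected one (Knaster--Tarski for the fixpoint claim, then finite branching plus the infinite pigeonhole to extract a single matching $\nu^\star$, then a limit argument to show $\bigcap_n\prec_\epsilon^n$ is itself an $\epsilon$-simulation). You correctly identify and resolve the one genuinely delicate point: $(\bigcap_m\prec_\epsilon^m)(E)$ can be strictly smaller than $\bigcap_m\prec_\epsilon^m(E)$, so continuity from above of $\nu^\star$ cannot be applied directly; restricting first to finite $E'\subseteq E\cap\mathrm{supp}(\mu)$ (where a second pigeonhole over the finitely many witnesses in $E'$ gives equality) and then exhausting the countable support of $\mu$ closes this gap cleanly. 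The only place worth a sentence more of care is the $\sim_\epsilon$ case: you should note explicitly that $G^\epsilon(R)\subseteq F^\epsilon(R)$ and that a symmetric post-fixed point of $F^\epsilon$ is exactly a post-fixed point of $G^\epsilon$, so that the greatest fixpoint of $G^\epsilon$ really is the union of all $\epsilon$-bisimulations; with that remark the proof is complete.
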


\subsection{Lifting of relations and flow networks.} 

The lifting of a relation is a standard construction that transfers a relation $R$ on states to a relation $\mathcal{L}(R)$ on sub-distributions over states. Contrarily to the way we formulate Def.~\ref{d:eSim},  the usual definition of liftings is rather in terms of the existence of a weight function~\cite{Segala94}. 
We show the equivalence of the definitions.

\begin{definition}[$\epsilon$-weight functions]
Let $\epsilon\geq0$, and $\mu,\nu\in\cS(S)$. An \emph{$\epsilon$-weight function for $(\mu,\nu)$ with respect to $R$} is a function $\delta: S\times S\rightarrow[0;1]$ such that:
\begin{itemize}
 \item If $\delta(s,t)>0$ then $(s,t)\in R$.
 \item For all $s,t\in S$, $\sum_{s'\in S}\delta(s,s')\leq\mu(s)$ and $\sum_{s'\in S}\delta(s',t)\leq\nu(t)$.
 \item $\sum_{s,s'\in S}\delta(s,s')\geq\mu(S)-\epsilon$.
\end{itemize}
\end{definition}


Before stating the equivalence between our formulation of $\mathcal{L}^\epsilon(R)$ and the one with weight functions, we recall the notion of flow network,  since it provides a convenient alternative definition for applications~\cite{Baier96b}.  

A network is a tuple $\mathcal{N}=(V,E,\perp,\top,c)$ where $(V,E)$ is a finite directed graph in which every edge $(u,v)\in E$ has a non-negative, real-valued capacity $c(u,v)$. If $(u,v)\not\in E$ we assume $c(u,v)=0$. We distinguish two vertices: a source $\perp$ and a sink $\top$. For $v\in V$ let $in(v)$ be the set of incoming edges to node $v$, and $out(v)$ the set of outgoing edges from node $v$. A \emph{flow function} is a real function $f:V \times V \rightarrow \mathbb{R}$ with the two following  properties for all nodes $u$ and $v$:
\begin{itemize}
	\item Capacity constraints: $0\leq f(u,v)\leq c(u,v)$. The flow along an edge cannot exceed its capacity.
	\item Flow conservation: for each node $v\in V-\left\{\perp,\top\right\}$, we have $\sum_{e \in in(v)} f(e) = \sum_{e \in out(v)} f(e)$.
\end{itemize}
The flow $\mathcal{F}(f)$ of $f$ is given by $\mathcal{F}(f)=\sum_{e \in out(\perp)} f(e)-\sum_{e \in in(\top)} f(e)$. 

\begin{definition}[The network $\mathcal{N}(\mu,\nu,R)$]
Let $S$ be a finite set, $R\subseteq S\times S$, and $\mu,\nu\in \cS(S)$. Let $S'=\left\{t'|t\in S\right\}$, where $t'$ are pairwise distinct ``new" states (i.e. $t'\not\in S$). Let $\perp$ and $\top$ be two distinct new elements not contained in $S\cup S'$. The network $\mathcal{N}(\mu,\nu,R)=(V,E,\perp,\top,c)$ is defined as follows:
\begin{itemize}\addtolength{\itemsep}{-3pt}
	\item $V=S\cup S'\cup \left\{\perp,\top\right\}$.
	\item $E=\left\{(s,t')|(s,t)\in R\right\}\cup\left\{(\perp,s)|s\in S\right\}\cup\left\{(t',\top)|t\in S\right\}$.
	\item The capacity function $c$ is given by: $c(\perp,s)=\mu(s)$, $c(t,\top)=\nu(t)$, and $c(s,t)=1$ for all $s,t\in S$. 
\end{itemize}
\end{definition}

The following proposition gives various characterizations of the simulation relation.

\begin{proposition}\label{l-difflift}
Let $S$ be a finite set, $R\subseteq S\times S$, and $\mu,\nu\in \cS(S)$. The following properties are equivalent:
\begin{enumerate}\addtolength{\itemsep}{-3pt}
 \item $\mu\mathcal{L}^\epsilon(R)\nu$.
 \item The maximal flow in $\mathcal{N}(\mu,\nu,R)$ is greater than or equal to $\mu(S)-\epsilon$.
 \item There exists an $\epsilon$-weight function for $(\mu,\nu)$ with respect to $R$.
\item For all $R$-closed set $E\subseteq S$, we have $\mu(E)\leq\nu(E)+\epsilon$.
 \end{enumerate}
The equivalence with 4 applies only if the domain and image of $R$ are considered disjoint.  

\end{proposition}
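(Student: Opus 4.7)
The plan is to establish the equivalences in a circular-ish way: first show (2)$\Leftrightarrow$(3) by a direct translation between flows in $\mathcal{N}(\mu,\nu,R)$ and weight functions, then show (1)$\Leftrightarrow$(2) through the max-flow/min-cut theorem, and finally handle (1)$\Leftrightarrow$(4) separately under the bipartite assumption.

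For (2)$\Leftrightarrow$(3), I would set up a bijection between flow functions $f$ in $\mathcal{N}(\mu,\nu,R)$ and $\epsilon$-weight functions $\delta$ for $(\mu,\nu)$ w.r.t.\ $R$. Given a flow $f$, define $\delta(s,t):=f(s,t')$ for $(s,t)\in R$ and $0$ otherwise; the ``only $R$-edges exist'' structure of the network forces the support condition, the capacity constraint together with conservation at node $s$ (resp.\ $t'$) delivers the row/column inequalities with respect to $\mu$ and $\nu$, and $\mathcal{F}(f)=\sum_{s}f(\perp,s)=\sum_{s,t}\delta(s,t)$, so the threshold condition $\mathcal{F}(f)\geq\mu(S)-\epsilon$ matches the last bullet. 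The reverse direction is equally mechanical: given $\delta$, set $f(s,t'):=\delta(s,t)$, $f(\perp,s):=\sum_t\delta(s,t)$, $f(t',\top):=\sum_s\delta(s,t)$, and verify conservation and capacity constraints. So maximum flow $\geq\mu(S)-\epsilon$ iff an $\epsilon$-weight function exists.

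For (1)$\Leftrightarrow$(2), I apply max-flow/min-cut. The essential observation is that a finite-capacity cut separating $\perp$ from $\top$ has the form $(\{\perp\}\cup A\cup B',\{\top\}\cup (S\setminus A)\cup(S\setminus B)')$ where, since the $s\to t'$ edges carry infinite-in-spirit capacity $1$ but the problem only bounds things through $\mu,\nu$, any \emph{minimum} cut must cut no $R$-edge from $A$ to $(S\setminus B)'$, forcing $R(A)\subseteq B$. The cheapest choice for a given $A$ is $B=R(A)$, giving cut capacity $\mu(S\setminus A)+\nu(R(A))=\mu(S)-(\mu(A)-\nu(R(A)))$. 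Hence the min-cut equals $\mu(S)-\max_{A\subseteq S}(\mu(A)-\nu(R(A)))$, and the threshold ``max flow $\geq\mu(S)-\epsilon$'' is equivalent to $\mu(A)-\nu(R(A))\leq\epsilon$ for every $A\subseteq S$, i.e.\ (1). The main obstacle here is justifying carefully that minimum cuts take this canonical shape (unit-capacity $R$-edges are cheap, but they contribute at most $|S|^2$ which might matter; the fix is to argue min-cut by picking $A=\{s:f(\perp,s)<\mu(s)\text{ in a max flow}\}$ or to remark that $R$-edges can be given any large capacity without changing (3) and hence neither (2), so WLOG they are never cut).

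For (1)$\Leftrightarrow$(4), assuming the domain and image of $R$ are disjoint, write $S=S_1\sqcup S_2$ with $R\subseteq S_1\times S_2$, $\mu$ supported on $S_1$ and $\nu$ on $S_2$. The direction (1)$\Rightarrow$(4) is immediate: if $E$ is $R$-closed then $R(E)\subseteq E$, so $\mu(E)\leq\nu(R(E))+\epsilon\leq\nu(E)+\epsilon$. For (4)$\Rightarrow$(1), given an arbitrary $E\subseteq S$, the set $\widetilde E:=E\cup R(E)$ is $R$-closed, and $\mu(\widetilde E)=\mu(E\cap S_1)\geq \mu(E)$ while $\nu(\widetilde E)=\nu(R(E))$ (because $\nu$ lives on $S_2$), so (4) applied to $\widetilde E$ yields $\mu(E)\leq\nu(R(E))+\epsilon$. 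This last step is exactly why the disjointness proviso is needed: without it a set and its $R$-image can share points and $\nu(\widetilde E)$ can overshoot $\nu(R(E))$, breaking the implication.
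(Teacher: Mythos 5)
Your proposal is correct, and its decomposition ($1\Leftrightarrow 2$ via max-flow/min-cut, $2\Leftrightarrow 3$ via a flow/weight-function correspondence, $1\Rightarrow 4$ unconditionally, $4\Rightarrow 1$ under the bipartiteness proviso) is exactly the one the paper uses. The difference is that the paper's proof is essentially by citation --- $1\Leftrightarrow 2$ is delegated to Theorem~7 of \cite{DesLavTra08} and $2\Leftrightarrow 3$ to a mild generalization of \cite{Baier96b} --- whereas you reconstruct the arguments from scratch; what your version buys is self-containedness, plus two points the paper glosses over. First, you correctly flag that a minimum cut need not a priori avoid the unit-capacity $R$-edges, and your resolution works: since $\mu$ is a sub-distribution, any cut containing an $R$-edge has capacity at least $1\geq\mu(S)$, which is already achieved by the trivial cut $A=\emptyset$, so the minimum is attained on cuts of the canonical form $\mu(S\setminus A)+\nu(R(A))$ with $R(A)\subseteq B$. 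Second, your reading of the proviso in item~4 is in fact a necessary strengthening, not just a convenience: with only ``domain and image of $R$ disjoint'' but $\nu$ allowed mass on the domain side (e.g.\ $S=\{a,b\}$, $R=\{(a,b)\}$, $\mu=\nu=\delta_b$, $\epsilon=0$), condition~4 holds while condition~1 fails, so the additional hypothesis that $\mu$ lives on the domain side and $\nu$ on the image side --- which is what the paper's ``two copies of the state space'' construction silently guarantees --- is needed exactly where you use it, namely to get $\nu(\widetilde E)=\nu(R(E))$.
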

\begin{proof}
$1\Leftrightarrow2$ is Theorem 7 \cite{DesLavTra08}.  $2\Leftrightarrow3$ is  a slight generalization of a result of \cite{Baier96b}, in which $\epsilon=0$.
$1\Rightarrow4$ is always true and   $4\Rightarrow1$ is straightforward if the domain and image of $R$ are disjoint.
\end{proof}

The condition on the fourth statement may look restrictive, but it is quite natural. By taking two copies of the state space and relating a state to  the copies of the states that $R$ relates it to, one obtains a relation that satisfies the condition, yet representing the same relation as $R$.  This allows to make a distinction between states that are \emph{simulated} from states that are viewed as \emph{simulating}.

\section{Logic for $\epsilon$-simulation and $\epsilon$-bisimulation}

\subsection{The logic $\mathcal{L}$ and its corresponding notion of simulation}
In the context of deterministic PAs,  $\epsilon$-bi/simulation  are characterized~\cite{DesLavTra08} by the simple logic $\mathcal{L}^\neg$~\cite{Larsen91}, using a relaxed semantics  ``up to $\epsilon$".

\begin{definition}[The logics $\mathcal{L}$ and $\mathcal{L}^\neg$]
 The syntax of $\mathcal{L}^\neg$ is as follows:
\begin{center}
 $\mathcal{L}^\neg\quad \phi::=\top\mid \neg\phi |\ \phi_1\wedge\phi_2  |\ \phi_1\vee\phi_2  \ |\ \ang a\delta\phi$ ~~where $\delta\in\mathbb{Q}\cap[0;1]$.
\end{center}

We write $\mathcal{L}$ for the logic  without negation. Given a PA with components $(S,Act,\mathcal{D})$, the relaxed $\epsilon$-semantics $\models_\epsilon$ is defined by structural induction on the formulas. 
\begin{itemize}
 \item \hspace{-3mm} $\forall s\in S,s\models_\epsilon\top \quad\bullet s\models_\epsilon\neg\theta\;$ iff $\;s\not\models_{-\epsilon}\theta$ $\quad\bullet$ $s\models_\epsilon\phi_1\wedge\phi_2\;$ iff $\;s\models_\epsilon\phi_1$ and $s\models_\epsilon\phi_2$;
\item\hspace{-2mm} $s\models_\epsilon \ang a\delta\phi\ \mbox{iff  {there exists} a transition } s\stackrel{a}{\rightarrow}\mu\ \mbox{s.t. } \mu(\sem{\phi}_\epsilon)\geq\delta-\epsilon$,
\end{itemize}
where  ${\sem \phi}_\epsilon =\lbrace s\in S|s\models_\epsilon\phi\rbrace$, and the semantics of $\vee$ is similar to the one of $\wedge$.  Given $s\in S$, we write $\mathcal{F}^\epsilon(s)$ (resp. $\mathcal{F}^{\epsilon,\neg}(s)$) for the set of formulas in $\mathcal{L}$ (resp. in $\mathcal{L}^\neg$) $\epsilon$-satisfied by $s$.
\end{definition}

\jo{As for deterministic PAs,  the logic is less expressive with the relaxed semantics~\cite{DesLavTra08} than with the standard one. Indeed, for each $\phi\in\mathcal{L}^{\neg}$, we can construct an associated formula $\phi_\epsilon\in\mathcal{L}^\neg$ such that $\sem\phi_\epsilon=\sem{\phi_\epsilon}_0$.  Here is how this is done: $\top_\epsilon=\top$; $(\phi_1\wedge\phi_2)_\epsilon=(\phi_1)_\epsilon\wedge(\phi_2)_\epsilon$; $(\ang a\delta\phi)_\epsilon=\ang a{\delta-\epsilon}\phi_\epsilon$; $(\neg\phi)_\epsilon=\neg(\phi_{-\epsilon})$.
Here we use the fact that $\ang a\lambda\phi$ is still a valid formula, even if $\lambda<0$ or $\lambda>1$, which gives in turn that $(\phi_\epsilon)_{-\epsilon}=\phi$.
Clearly the transformation is additive, as  $(\phi_\epsilon)_{\epsilon^{'}}=\phi_{\epsilon+\epsilon^{'}}$.
}

\begin{example}
 If $\phi=\ang a{.5}(\neg\ang a{.2}\top)$, then $\phi_\epsilon=\ang a{.5-\epsilon}(\neg\ang a{.2+\epsilon}\top)$.
\end{example}

 The relaxed logic being less expressive is not an issue because  we  use the new semantics to simplify the formulations of the  logical characterisations.  It implies that model checking of formulas with the relaxed semantics can be done using the same technique as for the usual semantics.

The logics $\mathcal{L}$ and $\mathcal{L}^{\neg}$ induce $\epsilon$-simulation and $\epsilon$-bisimulation relations: 

\begin{definition}[Logical $\epsilon$-simulation.]\label{d:logicalSim}
Let $\epsilon\in[0;1]$,  $s,t\in S$. We say that \emph{$t$ $\mathcal{L}_\epsilon$-simulates $s$}, written $s\prec_{\epsilon}^\mathcal{L} t$, if for all formula $\phi\in\mathcal{L}$, $s\models \phi$ implies $t\models_\epsilon\phi$. States $t$ and $s$ are said \emph{$\epsilon$-logically equivalent,} written $s\sim^{\mathcal{L}^\neg}_\epsilon t$, if $s\prec_\epsilon^{\mathcal{L}^\neg} t$ and $t\prec_\epsilon^{\mathcal{L}^\neg} s$.


\end{definition}

The following theorem says that $\mathcal{L}$ characterizes $\epsilon$-simulation on deterministic PAs.  As Ex.~\ref{e:twoway} illustrates,  $\epsilon$-bisimulation is different from two-way $\epsilon$-simulation when $\epsilon>0$, and hence we need  negation to characterize $\epsilon$-bisimulation. 

\begin{theorem}[\cite{DesLavTra08}]
For deterministic PAs we have 
 $\prec_\epsilon\,=\,\prec^{\mathcal{L}}_\epsilon$  and
 $\sim_\epsilon\,=\,\sim^{\mathcal{L}^\neg}_\epsilon$
\end{theorem}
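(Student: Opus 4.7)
The plan is to prove each equality by establishing its two inclusions: soundness ($\prec_\epsilon \subseteq \prec^{\mathcal{L}}_\epsilon$, i.e.\ the logic is invariant under $\epsilon$-simulation) and completeness ($\prec^{\mathcal{L}}_\epsilon \subseteq \prec_\epsilon$, i.e.\ the logical relation is itself an $\epsilon$-simulation), and analogously for bisimulation. The crucial tool throughout is Proposition~\ref{l-difflift}, which lets me switch between the defining condition $\mu(E)\leq\nu(R(E))+\epsilon$ and the equivalent ``$R$-closed set'' condition~$\mu(E)\leq\nu(E)+\epsilon$.

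For soundness, I proceed by structural induction on $\phi\in\mathcal{L}$. The atomic and boolean cases follow directly from the $\epsilon$-semantics. The modal case is the only substantive one: if $s\models\langle a\rangle_\delta\psi$ via a transition $s\stackrel{a}{\to}\mu$ with $\mu(\sem{\psi}_0)\geq\delta$, then using $s\prec_\epsilon t$ I pick a matching transition $t\stackrel{a}{\to}\nu$ with $\mu\,\mathcal{L}^\epsilon(\prec_\epsilon)\,\nu$, and instantiate the defining inequality at $E=\sem{\psi}_0$. The induction hypothesis provides $\prec_\epsilon(\sem{\psi}_0)\subseteq\sem{\psi}_\epsilon$, whence $\nu(\sem{\psi}_\epsilon)\geq\nu(\prec_\epsilon(\sem{\psi}_0))\geq\mu(\sem{\psi}_0)-\epsilon\geq\delta-\epsilon$, so $t\models_\epsilon\langle a\rangle_\delta\psi$.

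For completeness I show that $\prec^{\mathcal{L}}_\epsilon$ is itself an $\epsilon$-simulation. Given $s\prec^{\mathcal{L}}_\epsilon t$ and a transition $s\stackrel{a}{\to}\mu$, the formula $\langle a\rangle_{\mu(S)}\top$ is satisfied by $s$, so $t$ must $\epsilon$-satisfy it; by determinism $t$ has a unique $a$-transition $t\stackrel{a}{\to}\nu$ and $\nu(S)\geq\mu(S)-\epsilon$. To verify $\mu\,\mathcal{L}^\epsilon(\prec^\mathcal{L}_\epsilon)\,\nu$ I use the $R$-closed-set reformulation of Proposition~\ref{l-difflift}(4) on the disjoint-copy version of $\prec^{\mathcal{L}}_\epsilon$; it then suffices to exhibit, for every $\prec^{\mathcal{L}}_\epsilon$-closed set $E$, a formula $\phi_E\in\mathcal{L}$ with $\sem{\phi_E}_0\subseteq E$ and $\prec^{\mathcal{L}}_\epsilon(E)\subseteq\sem{\phi_E}_\epsilon$, which I build as a finite disjunction of characteristic formulas for the pairs $(s',t')\notin\prec^{\mathcal{L}}_\epsilon$. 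Applying the hypothesis $s\prec^{\mathcal{L}}_\epsilon t$ to $\langle a\rangle_{\mu(E)}\phi_E$ then yields $\nu(E)\geq\nu(\sem{\phi_E}_\epsilon)\geq\mu(\sem{\phi_E}_0)-\epsilon=\mu(E)-\epsilon$, as required.

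The bisimulation equality is handled analogously, with the extra clause that the logical relation is symmetric and $\mathcal{L}^\neg$ is closed under negation. Soundness again goes by induction, the negation step using the identity $(\phi_\epsilon)_{-\epsilon}=\phi$ recorded in the excerpt. Completeness mirrors the simulation case in both directions. The main obstacle, and the only place where the hypotheses really bite, is the construction of the characteristic formula $\phi_E$ in the completeness step: it requires that countably many distinguishing pairs can be captured by a single formula, which forces us to rely on finite branching together with the iterative description $\prec_\epsilon=\bigcap_n \prec^n_\epsilon$ from the previous theorem, so that at each rank only finitely many inequivalent ``behaviours'' need to be separated.
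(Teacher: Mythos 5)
The paper itself does not prove this theorem --- it is imported wholesale from \cite{DesLavTra08} --- so the only in-paper benchmark is the proof of Theorem~\ref{th:priovspost}, which has the same shape. Your architecture matches it: soundness by structural induction on formulas, completeness by showing that $\prec^{\mathcal{L}}_\epsilon$ is itself an $\epsilon$-simulation via characteristic formulas, and determinism invoked precisely to collapse ``for every $E$ there is a matching $\nu$'' into ``there is a single $\nu$ that works for every $E$.'' That last point is exactly where the hypothesis must bite, since for general PAs the inclusion $\prec_\epsilon\subseteq\prec^{\mathcal{L}}_\epsilon$ is strict, and you identify it correctly. The soundness half, and the handling of negation through $(\phi_\epsilon)_{-\epsilon}=\phi$ together with symmetry, are fine.

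There is, however, a concrete error in the completeness step: the two conditions you impose on $\phi_E$ are both reversed, and your concluding chain of inequalities does not follow from the conditions as stated. To obtain $\mu(E)\le\nu(\prec^{\mathcal{L}}_\epsilon(E))+\epsilon$ you need $E\subseteq\sem{\phi_E}_0$ (so that $\mu(\sem{\phi_E}_0)\ge\mu(E)$ and $s$ genuinely satisfies $\ang a{\mu(\sem{\phi_E}_0)}\phi_E$) and $\sem{\phi_E}_\epsilon\subseteq\prec^{\mathcal{L}}_\epsilon(E)$ (so that $\nu(\prec^{\mathcal{L}}_\epsilon(E))\ge\nu(\sem{\phi_E}_\epsilon)$). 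You instead require $\sem{\phi_E}_0\subseteq E$ and $\prec^{\mathcal{L}}_\epsilon(E)\subseteq\sem{\phi_E}_\epsilon$: the second inclusion is automatic for any formula exactly satisfied by all of $E$ and carries no information, while the first gives $\mu(\sem{\phi_E}_0)\le\mu(E)$, the wrong direction; consequently neither the step $\nu(E)\ge\nu(\sem{\phi_E}_\epsilon)$ nor the final equality $\mu(\sem{\phi_E}_0)=\mu(E)$ in your display is justified. The repair is local: the construction you actually describe --- for each $u$ conjoin one distinguishing formula per $v$ with $u\not\prec^{\mathcal{L}}_\epsilon v$, obtaining $\phi_u$ with $u\models\phi_u$ and $\sem{\phi_u}_\epsilon\subseteq\prec^{\mathcal{L}}_\epsilon(\{u\})$, then set $\phi_E=\bigvee_{u\in E}\phi_u$ --- delivers exactly the corrected pair of inclusions, not the ones you wrote; one then instantiates the hypothesis at $\ang a{\mu(\sem{\phi_E}_0)}\phi_E$ (approximating the threshold by rationals from below if necessary). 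With that fix the argument goes through; note also that over a finite state space the appeal to $\prec_\epsilon=\bigcap_n\prec^n_\epsilon$ and finite branching is unnecessary --- it is only needed to keep the conjunctions finite when the state space is denumerable.
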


As can be expected, the logics $\mathcal{L}$ and $\mathcal{L}^{\neg}$  are not strong enough to characterize $\epsilon$-bi/simulation for PAs.  In the next section, we will present a stronger logic that will characterize these notions.  Nevertheless, we can  present the notions that correspond to $\prec^{\mathcal{L}}_\epsilon$ and $\sim^{\mathcal{L}^\neg}_\epsilon$.  The name  ``a-priori"~\cite{Alfaro07} comes from the order of the quantifiers in the definition: sets $E$ in the following definition are chosen \emph{before} the matching transition, which contrasts with  Def.~\ref{d:eSim}.

\begin{definition}[A priori $\epsilon$-simulation and bisimulation]
Let $\mathcal{S}=(S,Act,\mathcal{D})$ be a PA. A relation $R\subseteq S\times S$ is an \emph{a priori $\epsilon$-simulation} on $\mathcal{S}$ iff for all $s,t\in S$ such that $sRt$, for all $s\stackrel{a}{\rightarrow}\mu$ and for all $E\subseteq S$, there exists a transition $t\stackrel{a}{\rightarrow}\nu$ such that $\mu(E)\leq\nu(R(E))+\epsilon$. 
If $R$ is symmetric, then it is an  \emph{a priori $\epsilon$-bisimulation}. We write $\preceq_\epsilon^{prio}$ and $\sim_\epsilon^{prio}$ for the largest relations of a priori $\epsilon$-simula\-tion and $\epsilon$-bisimula\-tion.
\end{definition}

Before proving that this relation is characterized by $\cL$, we introduce some notation.
We define $\prec^{prio,n}_{\epsilon}$ iteratively in the same way as $\prec_{\epsilon}^n$, using $F^{prio}_\epsilon:2^{S\times S}\to 2^{S\times S}$, which is defined as follows: $\forall R\subseteq S\times S,\ \forall s,t\in S$, let $(s,t)\in F^{prio}_\epsilon(R)$ iff
$\forall s\stackrel{a}{\rightarrow}\mu,\ \forall E\subseteq S,\ \exists t\stackrel{a}{\rightarrow}\nu\ s.t.\ \mu(E)\leq\nu(R(E))+\epsilon$.
As for $\prec_{\epsilon}=\bigcap_{n\in\mathbb{N}}\prec^{n}_{\epsilon}$, we can show that we have $\prec^{prio}_{\epsilon}=\bigcap_{n\in\mathbb{N}}\prec^{prio,n}_{\epsilon}$.
	The \emph{depth} of a formula is the maximal number of imbrications of $\ang a \delta$ operators. We write $\mathcal{F}^n$ for the set of formulas of $\mathcal{L}$ of depth at most $n$.
	 Given $n\in\mathbb{N}$,  given $s\in S$, $\mathcal{F}^n_\epsilon(s)$ is the set of formulas $\phi\in\mathcal{L}$ of depth at most $n$ such that $s\models_\epsilon\phi$ and $\mathcal{F}_\epsilon(s)$ is the set of formulas $\phi\in\mathcal{L}$ such that $s\models_\epsilon\phi$.  We define $\mathcal{F}^n(s)=\cF^n_0(s)$ and $\mathcal{F}(s)=\cF_0(s)$.

	
The next theorem proves the logical characterization of 	$\prec_\epsilon^{prio}$ and $\sim_\epsilon^{prio}$.
\begin{theorem}\label{th:priovspost}
Let $\mathcal{S}=(S,Act,\mathcal{D})$ be a PA, and let $s,t\in S$. Then:
\begin{enumerate}
\item $s\prec^{prio,n}_\epsilon t$ iff $\mathcal{F}^n(s)\subseteq\mathcal{F}_\epsilon^n(t)$ for all $n\geq0$
\item 	For all $n \geq 0$, for all $u,v\in S$, there exists $\phi_u\in\mathcal{F}_n$ such that 
	$v\models_\epsilon\phi_u\ \mbox{iff}\ u\prec^{prio,n}_\epsilon v$
\item $\prec_\epsilon\;\subseteq\;\prec_\epsilon^{prio} \;=\;\prec^{\mathcal{L}}_\epsilon$ and the inclusion is strict.
 \item $\sim_\epsilon\;\subseteq\;\sim_\epsilon^{prio} \;=\;\sim^{\mathcal{L}^\neg}_\epsilon $ and the inclusion is strict.
\end{enumerate}
\end{theorem}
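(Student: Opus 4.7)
My plan is to prove items 1 and 2 by simultaneous induction on $n$, and then derive items 3 and 4 as consequences. The base case is immediate: $\prec^{prio,0}_\epsilon=S\times S$, so item 1 holds vacuously and $\phi_u=\top$ witnesses item 2. In the inductive step I would establish item 1 at level $n{+}1$ first, and use it together with the level-$n$ characteristic formulas to build the level-$(n{+}1)$ characteristic formulas needed for item 2.

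For the forward direction of item 1 at level $n{+}1$, I do structural induction on $\phi\in\mathcal{F}^{n+1}(s)$; only the modal case $\phi=\ang a\delta\psi$ with $\psi\in\mathcal{F}^n$ needs work. Given $s\stackrel{a}{\rightarrow}\mu$ with $\mu(\sem\psi)\geq\delta$, the a priori definition applied with $E=\sem\psi$ yields $t\stackrel{a}{\rightarrow}\nu$ with $\nu(\prec^{prio,n}_\epsilon(E))\geq\delta-\epsilon$, and the outer IH (item 1 at level $n$ applied to $\psi$) gives $\prec^{prio,n}_\epsilon(E)\subseteq\sem\psi_\epsilon$, so $t\models_\epsilon\phi$. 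For the converse direction, given $s\stackrel{a}{\rightarrow}\mu$ and $E\subseteq S$, I form $\Phi_E=\bigvee_{w\in E}\phi_w^n$ from the level-$n$ characteristic formulas. Reflexivity of $\prec^{prio,n}_0$ (inherited along the iteration) gives $E\subseteq\sem{\Phi_E}_0$ and hence $s\models_0\ang a{\mu(E)}\Phi_E$; the IH also supplies the semantic equality $\sem{\Phi_E}_\epsilon=\prec^{prio,n}_\epsilon(E)$. Applying the hypothesis $\mathcal{F}^{n+1}(s)\subseteq\mathcal{F}^{n+1}_\epsilon(t)$ to this formula produces the required $t\stackrel{a}{\rightarrow}\nu$. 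Item 2 at level $n{+}1$ then follows by setting $\phi_u^{n+1}=\bigwedge_{a,\,u\stackrel{a}{\rightarrow}\mu,\,E\subseteq S}\ang a{\mu(E)}\Phi_E^n$ and unfolding the relaxed semantics using the same equality; note that this formula does not depend on $\epsilon$, so a single $\phi_u^n$ serves for every choice of $\epsilon\geq0$ at once, which is crucial for being able to use the $\epsilon=0$ reading in the converse argument.

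The main technical obstacle is finiteness: $\phi_u^{n+1}$ is a genuine finite $\mathcal{L}$-formula only when each state has finitely many outgoing transitions (granted by finitely branching) and only finitely many non-equivalent $\Phi_E^n$ arise, which holds as soon as $S$ is finite, so item 2 should be read under that assumption. Item 3 then follows readily: $\prec_\epsilon\subseteq\prec^{prio}_\epsilon$ is direct from Def.~\ref{d:eSim}, which demands a single $\nu$ that works for every $E$ simultaneously, strictly stronger than the a priori requirement; the equality $\prec^{prio}_\epsilon=\prec^\mathcal{L}_\epsilon$ is obtained by intersecting the level-by-level equivalence of item 1 over all $n$ and invoking the fixpoint identity $\prec^{prio}_\epsilon=\bigcap_n\prec^{prio,n}_\epsilon$ recalled in the excerpt; and strictness is witnessed by a small PA where some state $t$ has two distinct $a$-transitions $\nu_1,\nu_2$, each matching a given $\mu$ on a different family of test sets $E$ but no single one matching all of them (for instance, $\mu$ uniform over two successors with incomparable continuations, and $\nu_1,\nu_2$ Dirac on each). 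Item 4 is the symmetric-plus-negation version; I expect the careful handling of negation in $\mathcal{L}^\neg$ to be the trickiest step, and I would deal with it by leaning on the transformation $\phi\mapsto\phi_\epsilon$ from the excerpt, which shifts thresholds so that the $\epsilon$-semantics of $\neg\phi$ is the complement of the $(-\epsilon)$-semantics of $\phi$; this reduces the negation case of the structural induction to the positive $\mathcal{L}$-argument run in both directions, and the strict inclusion $\sim_\epsilon\subsetneq\sim^{prio}_\epsilon$ is obtained by symmetrising the example of item 3.
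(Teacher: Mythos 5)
Your proposal is correct and follows the same overall strategy as the paper's proof: a simultaneous induction on $n$ for items 1 and 2, with the converse of item 1 obtained by testing the formula $\ang{a}{\mu(E)}{\Phi_E}$ where $\Phi_E$ is a disjunction of level-$n$ characteristic formulas, and items 3 and 4 derived by intersecting over $n$. The one genuine divergence is the construction of the characteristic formula $\phi_u$. The paper builds it \emph{negatively}, as a conjunction $\bigwedge_{v}\phi_{(u,v)}$ of formulas distinguishing $u$ from each $v$ with $u\not\prec^{prio,n}_\epsilon v$ (such $\phi_{(u,v)}$ exist by the level-$n$ instance of item 1), whereas you build it \emph{positively}, as $\bigwedge_{a,\,u\stackrel{a}{\rightarrow}\mu,\,E}\ang{a}{\mu(E)}{\Phi^n_E}$, reading the definition of $F^{prio}_\epsilon$ directly off the relaxed semantics. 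Both constructions need $S$ finite and finitely branching (and, implicitly, rational probabilities so that the thresholds $\mu(E)$ are admissible). Your version has two advantages: it makes transparent that $\phi_u$ does not depend on $\epsilon$ --- a point that is genuinely needed, since $\Phi_E$ is read at $\epsilon=0$ on the $s$ side and at $\epsilon$ on the $t$ side, and which the paper leaves implicit --- and it avoids the level bookkeeping that makes the paper's passage from level $n$ to level $n+1$ in item 2 delicate to state. The paper's construction is more economical (formulas of size polynomial in $|S|$, versus your conjunction over all $E\subseteq S$, which is exponential). You also write out the forward direction of item 1, which the paper dismisses as straightforward, and you propose a simpler strictness witness (two Dirac successor distributions with incomparable continuations) than the paper's Ex.~\ref{e:prioNOTpost}; your example is valid here because Def.~\ref{d:eSim} quantifies only over transitions in $\mathcal D$, but unlike the paper's three-distribution example it would be defeated if convex combinations of transitions were allowed, which is presumably why the paper's example is more elaborate.
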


This proof generalizes to the context of countable state space systems, using a method close to the method used in \cite{DesLavTra08} to extend logic characterizations to denumerable state spaces. 


\begin{proof}
Inclusions are straightforward. The proof of the strictness of inclusion is given by the example following the proof.  The structure of the proof is similar to the ones of \cite{Larsen91} and \cite{Segala94} for the logical characterisation of simulation but we adapt them to  systems with non determinism.   The third point is a corollary of the first one. The fourth point is not more difficult and is the same kind of translation as the proof of \cite{Larsen91}.  
We sketch the proof of the first two   points, concentrating on the ``$\Leftarrow$" direction.
The two points are proven simultaneously  by induction on $n$.  
The base case follows trivially from the definitions.
 Assume that the claims are true for $n$. \jo{We now prove 2 for $n+1$. Fix $u\in S$.  We define $\phi_u\in \mathcal F_{n+1}$ as follows.  Let $v\in S$ such that  $u\not\prec^{prio,n}_\epsilon v$; by induction $\mathcal{F}^n(u)\not\subseteq\mathcal{F}^n_\epsilon(v)$, that is, there exists a formula $\phi_{(u,v)}\in\mathcal{F}_n$, such that $u\models \phi_{(u,v)}$ and $v\not\models_\epsilon\phi_{(u,v)}$. The formula $\phi_u:=\bigwedge_{v\not\prec^{prio,n}_\epsilon u}\phi_{(u,v)}$ is in 
$\mathcal{F}_{n+1}$ because $S$ is finite. Now, $u\models\phi_u$, and any $v$ such that $u\not\prec^{prio,n+1}_\epsilon v$ verifies $u\not\prec^{prio,n}_\epsilon v$ and hence $v\not\models_\epsilon\phi_u$.}  Since $u\prec^{prio,n}_\epsilon w$ implies $w\models\phi_u$ by induction, we get the result; hence 2 is proven for $n+1$.  As for 1,  suppose $\mathcal{F}^{n+1}(s)\subseteq\mathcal{F}^{n+1}_\epsilon(t)$. Let $s\stackrel{a}{\rightarrow}\mu$ be a transition from $s$, and let $E\subseteq S$. We are looking for a transition $t\stackrel{a}{\rightarrow}\nu$ such that $\nu(\prec^{prio,n}_\epsilon E)\geq\mu(E)-\epsilon$. Let $p=\mu(E)$. We construct a formula $\phi_E$ such that for all $u\in S$, $u\models\phi_E$ iff $u\in E$. By the second claim, we just have to consider $\phi_E=\bigvee_{u\in E}\phi_u$. Now, $s\models\ang a p \phi_E$. Since $\ang a p \phi_E \in\mathcal{F}^{n+1}$, we must have $t\models_\epsilon\ang a p \phi_E$. Since $\prec^{prio,n}_\epsilon \sem {\phi_E} \subseteq {\sem \phi_E}_\epsilon$, we get that there exists a transition $t\stackrel{a}{\rightarrow}\nu$ from $t$ such that $\nu(\prec^{prio,n}_\epsilon E)\geq\mu(E)-\epsilon$, which proves the result.
\end{proof}

\begin{example}\label{e:prioNOTpost}
In the following PA (where the $b_i$'s are different labels), we have $s\prec^{prio} t$ and~$s\not\prec t$.     
$$\xymatrix@R=3mm@C=10mm{
&     s	\ar@{-}[d]^a																&&& t\ar@{-}[dl]_a\ar@{-}[d]^a\ar@{-}[dr]^a \\
	& \mu\ar[dl]_{\frac{1}{3}}\ar[d]^{\frac{1}{4}}\ar[dr]^{\frac{5}{12}} 	
		&&\nu_1\ar[drr]\ar[d]\ar[dr]&\nu_2\ar[dl]\ar[d]\ar[dr]&\nu_3\ar[dl]\ar[d]\ar[dll]\\
	s_1\ar[dd]_{b_1,1}&s_2\ar[dd]_{b_2,1}&s_3\ar[dd]_{b_3,1}               &  t_1\ar[dd]_{b_1,1}&t_2\ar[dd]_{b_2,1}&t_3\ar[dd]_{b_3,1}\\\\
	s_4&s_5&s_6								&      t_4&t_5&t_6
}~~~~~~ \begin{tabular}[t]{|c|c|c|c|}\hline
		&$t_1	$&$t_2	$&$t_3$\\\hline	\raisebox{-2.5mm}{\rule{0mm}{7mm}}
$\nu_1$&$\frac{7}{24}$&$\frac{7}{24}$&$\frac{5}{12}$\\\hline	\raisebox{-2.5mm}{\rule{0mm}{7mm}}
$\nu_2$&$\frac{3}{8}$&$\frac{1}{4}$&$\frac{3}{8}$\\\hline	\raisebox{-2.5mm}{\rule{0mm}{7mm}}
$\nu_3$&$\frac{1}{3}$&$\frac{1}{3}$&$\frac{1}{3}$\\\hline
\end{tabular}
$$
The transitions has been chosen such that  $\nu_3(t_1)=\mu(s_1)$,
 $\nu_2(t_2)=\mu(s_2)$,
 $\nu_1(t_3)=\mu(s_3)$,
 $\nu_1(\lbrace t_1\rbrace\cup\lbrace t_2\rbrace)=\mu(\lbrace s_1\rbrace\cup\lbrace s_2\rbrace)$,
 $\nu_2(\lbrace t_1\rbrace\cup\lbrace t_3\rbrace)=\mu(\lbrace s_1\rbrace\cup\lbrace s_3\rbrace)$,
 $\nu_3(\lbrace t_2\rbrace\cup\lbrace t_3\rbrace)=\mu(\lbrace s_2\rbrace\cup\lbrace s_3\rbrace)$.
Then it is easy to see that $s_i$ is simulated by $t_i$ for all $i=1,2,3$. Moreover, the last set of equalities shows that $s\prec_{prio} t$. However, we do not have $s\prec_{} t$. Indeed for all transitions $t\stackrel{a}{\rightarrow}\nu$ (combined or not) from $t$, we can find a set $E\subseteq \cup_{i\in\{1,2,3\}}\left\{s_i,t_i\right\}$ containing $s_i$ if and only if it contains $t_i$, and such that $\mu(E)>\nu(E)$.

\end{example}

\begin{remark}
By Theorem \ref{th:priovspost} item 3, for the PA of Ex.~\ref{e:twoway}, we have $s\prec_{\frac{1}{8}}^{prio}t$ and $t\prec_{\frac{1}{8}}^{prio}s$. Now, the only state $\frac{1}{8}$-a-priori bisimilar to $x$ is, here again, $x$ itself. Hence, for $s\stackrel{a}{\rightarrow}\mu_x$ and $E=\lbrace x\rbrace$, there exists no transition $t\stackrel{a}{\rightarrow}\nu$ such that $\mu(E)\leq\nu(\sim_{\frac{1}{8}}^{prio}(E))+\frac{1}{8}$, hence $s$ and $t$ are not $\frac{1}{8}$-a-priori bisimilar. As a consequence, two-way a-priori simulation is different from a-priori bisimulation, and the negation is needed in the logical characterization of bisimilarity.
\end{remark}



\textbf{Decidability of A Priori Simulation.} An interesting fact is that it is NP-hard to decide  a-priori simulation and bisimulation, even when $\epsilon=0$. This contrasts with classical results on strong simulation and bisimulation  whose decision procedures were proven to be in Poly-time (see \cite{Baier96b,zhang2007flow,Segala02}). The proof of the following theorem, not presented here due to lqck of space, is by reducing the \emph{subset sum problem}, known to be NP-complete (\cite{garey1979computers}), to our problem.

\begin{theorem}\label{t:aprioriNP}
The following problem  is NP-complete:\\
\textbf{Input:} A PA $\mathcal{S}$, $s,t\in S$.
\textbf{Question:} Do we have $s\prec^{prio} t$?
\end{theorem}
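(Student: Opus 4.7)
The plan is to establish the two directions of NP-completeness separately. For membership in NP, I would produce a polynomial-size certificate; for hardness, I would reduce from subset sum, exploiting the quantifier alternation $\forall E\ \exists\nu$ characteristic of a priori simulation.

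For NP membership, I would guess a relation $R\subseteq S\times S$ with $(s,t)\in R$ and verify that $R$ is an a priori simulation. A naive verification seems to require checking all $E\subseteq S$, which is exponential; the key observation is that for fixed $(s',t')\in R$ and fixed transition $s'\stackrel{a}{\rightarrow}\mu$, the condition ``$\forall E\;\exists\nu:\mu(E)\leq\nu(R(E))$'' only depends on $E$ through the pair $(\mu(E),R(E))$, and the distinct values of $R(E)$ induce a coarser equivalence on $2^S$ that need only be probed at the finitely many ``critical'' subsets determined by the supports of the finitely many transitions $t'\stackrel{a}{\rightarrow}\nu$. For each pair, each transition from $s'$, and each critical type, the guessed certificate records the chosen $\nu$, and the inequality $\mu(E)\leq\nu(R(E))$ is checkable in polynomial time (alternatively, via the flow-network formulation of Prop.~\ref{l-difflift}).

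For NP-hardness, I would reduce from subset sum: given $a_1,\dots,a_n\in\mathbb{N}^*$ and target $T$ with $A=\sum_i a_i$, build a PA in which $s$ has a single transition $s\stackrel{a}{\rightarrow}\mu$ with $\mu(s_i)=a_i/A$, while $t$ has a small family of transitions $t\stackrel{a}{\rightarrow}\nu_k$ designed so that the set of subsets $E\subseteq\{s_1,\dots,s_n\}$ covered by $\nu_k$ (i.e.\ those with $\nu_k(R(E))\geq\mu(E)$) is exactly $\{E:\sum_{i\in E}a_i<T\}$ for one transition and $\{E:\sum_{i\in E}a_i>T\}$ for another, through a suitable choice of gadgets and relation $R$ between the $s_i$'s and the targets of the $\nu_k$'s. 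The collection of transitions is polynomial, but together they cover every $E$ except those with $\sum_{i\in E}a_i=T$; thus $s\prec^{prio} t$ iff no subset sums exactly to $T$, giving a polynomial reduction from the complement of subset sum (and hence, by standard complementation of subset sum instances, from subset sum itself).

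The main obstacle is the NP membership argument: the alternation $\forall E\ \exists\nu$ naturally places the problem in $\Pi_2^P$, so collapsing it to NP requires exploiting that the existential choice of $\nu$ ranges over only polynomially many transitions, which in turn bounds the number of ``distinguishing'' subsets one must certify. A secondary technical point in the reduction is rigging the relation $R$ (and adding auxiliary ``sink'' states as needed) so that the coverage patterns of the $\nu_k$'s precisely separate the subsets by their sum relative to $T$, without accidentally covering subsets summing exactly to $T$; this is where small perturbations and dedicated unrelated states carry the construction.
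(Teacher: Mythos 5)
The paper does not actually print a proof of this theorem (it only records that hardness is obtained ``by reducing the subset sum problem to our problem''), so I can only judge your argument on its own terms; both directions have genuine gaps. Start with NP membership. Guessing $R$ is the natural first move, but your claim that the universal quantification over $E$ collapses to ``finitely many critical subsets determined by the supports of the transitions $t'\stackrel{a}{\rightarrow}\nu$'' is unsubstantiated and false in general. When $R$ restricted to $\mathrm{supp}(\mu)\times\bigcup_j\mathrm{supp}(\nu_j)$ is a forced bijection $s_i\mapsto t_i$ (exactly the situation of Ex.~\ref{e:prioNOTpost}, enforced by distinct labels $b_i$), the sets $R(E)$ take $2^n$ distinct values, and the condition ``for all $E$ there is some $j$ with $\nu_j(R(E))\geq\mu(E)$'' amounts to the infeasibility of a system of strict linear inequalities over $\{0,1\}^n$ --- a subset-sum-type question. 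Worse, your own hardness instances are of precisely this shape: $s$ has a single transition and the candidate relation is forced, so on those instances deciding $s\prec^{prio}t$ \emph{is} verifying the guessed relation. You therefore cannot simultaneously claim that those instances are hard and that verification is polynomial. Prop.~\ref{l-difflift} does not rescue this: it characterises $\mu\,\mathcal{L}^\epsilon(R)\,\nu$ for a \emph{single} $\nu$ chosen before $E$, whereas here $\nu$ may depend on $E$, and the disjunction over several $\nu_j$'s is exactly what breaks the max-flow/min-cut argument.

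The hardness direction has a cleaner but equally real flaw. As you set it up, the transitions $\nu_k$ jointly cover every $E$ except those with $\sum_{i\in E}a_i=T$, so $s\prec^{prio}t$ holds iff \emph{no} subset sums to $T$. That is a many-one reduction from the \emph{complement} of subset sum, hence a proof of coNP-hardness, not NP-hardness. The proposed repair --- ``by standard complementation of subset sum instances'' --- does not exist: a polynomial-time map sending a subset-sum instance to one with the negated answer would give a many-one reduction of an NP-complete problem to a coNP-complete one and hence $\mathrm{NP}=\mathrm{coNP}$. To match the paper's stated claim you would need either a gadget in which a YES instance of subset sum forces $s\prec^{prio}t$ (hard to arrange, since the existential witness of subset sum must then be absorbed by the universally quantified $E$ or by the choice of $R$), or you must settle for coNP-hardness / NP-hardness under Turing reductions and flag that the membership half of ``NP-complete'' remains open in your write-up. (There is also a smaller technical point you gesture at but do not resolve: realising coverage sets exactly of the form $\{E:\sum_{i\in E}a_i<T\}$ and $\{E:\sum_{i\in E}a_i>T\}$ with genuine sub-distributions requires a homogenisation trick, since $\nu_k(R(E))\geq\mu(E)$ has no constant term; an auxiliary state whose membership in $E$ you can control is needed.)
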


\subsection{The logic $\mathcal{L}^N$ for PAs}

We saw in the previous subsection that $\mathcal{L}$  is not strong enough for PAs.  We now give a logic characterizing our relaxed relations $\prec_\epsilon$ and $\sim_\epsilon$.
   The difference between this logic and $\mathcal{L}^{\neg}$ is the modal operator that permits to ``isolate" a distribution out of a state, and write properties that it satisfies.  This allows the semantics to be defined on states, as pointed out as well by D'Argenio et al.~\cite{DArg09}.  In contrast, Parma and Segala~\cite{Segala07} used a semantics on distributions to prove the logical characterisation of bisimulation (with $\epsilon=0$).

\begin{definition}[The logic $\mathcal{L}^{N,\neg}$]
 The syntax differs by one operator from  $\mathcal{L}^{\neg}$:
\begin{center}
 $\mathcal{L}^{N,\neg}~ \phi:=\top|~\neg\phi\mid \phi_1\wedge\phi_2  |\ \phi_1\vee\phi_2  \ |\ \angs a\lbrace(\phi_i,p_i)\rbrace_{i\in I}$ $\ \ \ $ $I$ finite, $p_i\in\mathbb{Q}\cap[0;1]$. \end{center}
We write $\mathcal{L}^N$ for the same logic without negation. 
Let $\epsilon\in[0;1]$.  
The relaxed semantics $\models_\epsilon$ of $\mathcal{L}^{N,\neg}$ is defined by structural induction on the formulas, in the same way as for $\cL^{\neg}$ except for the modal formula: $s\models_\epsilon \angs a\lbrace(\phi_i,p_i)\rbrace_{i\in I}$ iff there exists a transition $s\stackrel{a}{\rightarrow}\mu$ from $s$ such that for all $i\in I$, we have $\mu({\sem {\phi_i}}_\epsilon)\geq p_i-\epsilon$.
\end{definition}

As for the logic $\cL^{\neg}$  we can, by structural induction on the formulas, construct for each $\phi\in\mathcal{L}^{N,\neg}$ an associated formula $\phi_\epsilon\in\mathcal{L}^{N,\neg}$  such that ${\sem \phi}_\epsilon=\sem {\phi_\epsilon}$.  Hence, here again, model checking  of formulas with the relaxed semantics can be done using the same technique as for the usual semantics.
The logics $\mathcal{L}^N$ and $\mathcal{L}^{N,\neg}$ induce the  relations $\prec^{\mathcal{L}^N}_\epsilon $ and $\sim^{\mathcal{L}^N}_\epsilon $ as in Def.~\ref{d:logicalSim}. 
The following example illustrates how this logic differs from $\cL$. The key difference is that formula $\angs a\lbrace(\phi_i,p_i)\rbrace_{i\in I}$ is not equivalent to $\wedge_{i\in I}\ang a{p_i}{\phi_i}$.
\begin{example}
Consider the PAs of Ex.~\ref{e:prioNOTpost}.  As $s\prec^{prio} t$, we also have $s\prec^{\mathcal{L}}_\epsilon t$ by Theorem~\ref{th:priovspost}, and hence every formula of $\cL$ satisfied by $s$ is also satisfied by $t$.  However,  $s\not\prec t$ and the formula $$ \angs a\lbrace (\ang {b_1} 1 \!\top,\frac{1}{3}), (\ang {b_2} 1 \!\top,\frac{1}{4}),(\ang {b_3} 1 \!\top,\frac{5}{12}) \rbrace$$ of $\cL^N$ is not satisfied by $t$.  Note how the semantics forces the $\nu_i$'s to commit before the three ``$b_i$ formulas" are checked.
\end{example}

The following theorem is a logical characterization of the $\epsilon$-relations. Notice that  we need negation  once again, since two-way simulation is different from bisimulation.

\begin{theorem}\label{t-equiv sim ND}
 $\prec_{\epsilon}=\prec^{\mathcal{L}^N}_\epsilon $ and $\sim_{\epsilon}=\sim^{\mathcal{L}^{N,\neg}}_\epsilon$.
\end{theorem}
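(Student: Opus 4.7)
The plan is to follow the classical Hennessy--Milner pattern, exactly mirroring the proof of Theorem~\ref{th:priovspost}, but with two essential modifications: (i) the semantic clause for the new modal operator $\angs{a}\{(\phi_i,p_i)\}_{i\in I}$ forces all the constraints $\phi_i$ to be tested against a \emph{single} distribution, which is what makes $\mathcal{L}^N$ strong enough to detect the discrepancy in Example~\ref{e:prioNOTpost}; and (ii) negation is used only for the bisimulation case. I would first prove the simulation case $\prec_\epsilon=\prec^{\mathcal{L}^N}_\epsilon$ and then derive the bisimulation case as an easy variant.

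\textbf{Soundness} ($\prec_\epsilon\subseteq\prec^{\mathcal{L}^N}_\epsilon$). Let $R$ be an $\epsilon$-simulation with $sRt$. I would show by structural induction on $\phi\in\mathcal{L}^N$ that if $s\models\phi$ then $t\models_\epsilon\phi$, and simultaneously that $R(\sem\phi)\subseteq\sem\phi_\epsilon$. The Boolean cases are routine. The interesting case is $\phi=\angs a\{(\phi_i,p_i)\}_{i\in I}$: if $s\xrightarrow{a}\mu$ witnesses $s\models\phi$, then by $\epsilon$-simulation there exists $t\xrightarrow{a}\nu$ with $\mu\,\mathcal{L}^\epsilon(R)\,\nu$. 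Taking $E=\sem{\phi_i}$ in the defining inequality of $\mathcal{L}^\epsilon(R)$ gives $\mu(\sem{\phi_i})\leq\nu(R(\sem{\phi_i}))+\epsilon$, and by the induction hypothesis $R(\sem{\phi_i})\subseteq\sem{\phi_i}_\epsilon$, so $\nu(\sem{\phi_i}_\epsilon)\geq p_i-\epsilon$ for every $i\in I$; hence $t\models_\epsilon\phi$ via the same $\nu$.

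\textbf{Completeness} ($\prec^{\mathcal{L}^N}_\epsilon\subseteq\prec_\epsilon$). Assuming finite branching, by the fixpoint characterisation it suffices to prove by induction on $n$ that $s\prec^{\mathcal{L}^N,n}_\epsilon t\Rightarrow s\prec_\epsilon^n t$, where $s\prec^{\mathcal{L}^N,n}_\epsilon t$ abbreviates $\mathcal{F}^n(s)\subseteq\mathcal{F}_\epsilon^n(t)$ relative to $\mathcal{L}^N$. Parallel to Theorem~\ref{th:priovspost}(2), the inductive step will rely on the existence, for every state $u$, of a characterising formula $\phi_u^n\in\mathcal{F}^n$ such that $v\models_\epsilon\phi_u^n\Leftrightarrow u\prec^{\mathcal{L}^N,n}_\epsilon v$, built as the finite conjunction $\bigwedge_{v\,:\,u\not\prec^{\mathcal{L}^N,n}_\epsilon v}\phi_{(u,v)}$ of formulas separating $u$ from each $v$ it does not simulate (this uses finiteness of $S$). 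Given these, suppose $\mathcal{F}^{n+1}(s)\subseteq\mathcal{F}^{n+1}_\epsilon(t)$ and let $s\xrightarrow{a}\mu$; I want $t\xrightarrow{a}\nu$ with $\mu\,\mathcal{L}^\epsilon(\prec^{\mathcal{L}^N,n}_\epsilon)\,\nu$. Suppose not; since $\mathcal S$ is finitely branching, there are only finitely many transitions $t\xrightarrow{a}\nu_k$, and for each, Prop.~\ref{l-difflift}(4) yields a set $E_k\subseteq S$ with $\mu(E_k)>\nu_k(\prec^{\mathcal{L}^N,n}_\epsilon(E_k))+\epsilon$. Setting $\phi_k:=\bigvee_{u\in E_k}\phi_u^n$ gives $E_k\subseteq\sem{\phi_k}$ and $\sem{\phi_k}_\epsilon\subseteq\prec^{\mathcal{L}^N,n}_\epsilon(E_k)$, so the single modal formula $\angs a\{(\phi_k,\mu(E_k))\}_k\in\mathcal{F}^{n+1}$ is satisfied by $s$ via $\mu$ but cannot be $\epsilon$-satisfied by $t$ via any $\nu_k$, contradicting the hypothesis.

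\textbf{Main obstacle and bisimulation case.} The crux of the argument is precisely the bundling step above: it is only because $\mathcal{L}^N$ lets a single modal formula encode several simultaneous probability constraints on the same $\nu$ that we can rule out all transitions $\nu_k$ at once. Without this, one falls back on $\mathcal{L}$ and only captures $\prec_\epsilon^{prio}$, as Example~\ref{e:prioNOTpost} and Theorem~\ref{th:priovspost} show. The bisimulation equality $\sim_\epsilon=\sim^{\mathcal{L}^{N,\neg}}_\epsilon$ is then obtained by symmetrising the argument: soundness extends verbatim (negation is handled as in the proof of Theorem~\ref{th:priovspost}, using $\sem{\neg\phi}_\epsilon=S\setminus\sem\phi_{-\epsilon}$), and for completeness the characterising formulas are built with negation exactly as in the Larsen--Skou construction adapted in Theorem~\ref{th:priovspost}(4), so that two-way $\epsilon$-simulation in the presence of $\neg$ yields an $\epsilon$-bisimulation.
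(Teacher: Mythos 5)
Your proposal is correct, and the soundness direction ($\prec_\epsilon\subseteq\prec^{\mathcal{L}^N}_\epsilon$) coincides with the paper's argument almost word for word. The completeness direction, however, takes a genuinely different route. The paper argues \emph{directly}: it enumerates the formulas of the logic, forms approximate characteristic formulas $\phi_e^k$ (conjunctions of the first $k$ formulas satisfied by $e$), bundles them into a single modal formula indexed by \emph{all subsets} $X$ of a finite set $E\subseteq S$, extracts from $t\models_\epsilon$ of that formula a witness $\nu_E^k$, and then passes to the limit twice (in $k$, then in a growing exhaustion $E_k\uparrow S$), using finite branching to extract convergent subsequences. You argue \emph{by contradiction}: exact depth-$n$ characteristic formulas let you turn each failing candidate transition $t\stackrel{a}{\rightarrow}\nu_k$ into one constraint $(\phi_k,\mu(E_k))$, and the single formula $\angs a\lbrace(\phi_k,\mu(E_k))\rbrace_k$, bundled over the finitely many \emph{transitions} rather than over subsets of $E$, defeats all of them at once. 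Your version is cleaner and makes the role of the new modality more transparent, but it genuinely requires $S$ finite (for the conjunctions defining $\phi_u^n$) on top of finite branching, whereas the paper's limit construction is designed to cover the denumerable state spaces allowed by its definition of PA --- this is the price of your simplification and worth flagging explicitly. Two small points: the set $E_k$ with $\mu(E_k)>\nu_k(\prec^{\mathcal{L}^N,n}_\epsilon(E_k))+\epsilon$ comes straight from negating the defining clause of $\mathcal{L}^\epsilon(R)$ (item 1 of Prop.~\ref{l-difflift}), not from item 4, whose equivalence needs the domain and image of $R$ to be disjoint; and your treatment of the bisimulation case is exactly as terse as the paper's own, which likewise defers to symmetry and the standard negation bookkeeping.
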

\begin{proof}
$[\prec_{\epsilon}\subseteq\prec^{\mathcal{L}^N}_\epsilon] $. 
Let $R$ be an $\epsilon$-simulation. We prove by structural induction that for all $\phi\in\mathcal{L}^N$, $R(\sem\phi)\subseteq {\sem\phi}_\epsilon$. We prove the case where $\phi=\angs a\lbrace(\phi_i,p_i)\rbrace_{i\in I}$, since the other cases are trivial. Let $s\in\sem\phi$, $t\in R(\{s\})$ and let $s\stackrel{a}{\rightarrow}\mu$ be the associated transition such that for all $i\in I$, $\mu({\sem {\phi_i}})\geq p_i$.
Since $R$ is an $\epsilon$-simulation, there exists a transition $t\stackrel{a}{\rightarrow}\nu$ such that $\mu\mathcal{L}^\epsilon(R)\nu$. Thus, for all $E\subseteq S$, we have $\mu(E)\leq\nu(R(E))+\epsilon$. In particular, given $i\in I$, we get that $\mu(\sem {\phi_i})\leq\nu(R(\sem {\phi_i}))+\epsilon$. But by induction hypothesis, we know that for all $i\in I$, $R(\sem{\phi_i})\subseteq{\sem{\phi_i}}_\epsilon$. This gives us that $\mu(\sem {\phi_i})\leq\nu({\sem {\phi_i}}_\epsilon))+\epsilon$ hence the result since by hypothesis $\mu(\sem {\phi_i})\geq p_i$.

$[\prec_{\epsilon}\supseteq\prec^{\mathcal{L}^N}_\epsilon] $.  We prove that $\prec_\epsilon^\mathcal{L}$ is an $\epsilon$-simulation.  Suppose that $s\prec_{\epsilon}^\mathcal{L} t$, and let $s\stackrel{a}{\rightarrow}\mu$ be a transition from $s$.
 We need to find some $t\to \nu$ such that $\nu(\prec_\epsilon^\mathcal{L}(X))\geq\mu(X)-\epsilon$ for all $X\subseteq S$. 
 This will be constructed from a family of $t\to\nu_E$ for finite sets $E$. Let $n\in \mathbb{N}$ and let $E\subseteq S$ be a finite set such that  $\mu(E)\geq \mu(S)-1/n$.

The key idea is to define the formula $\phi_e^k=\wedge_{e\models\phi_j,\ j\leq k}\phi_j$ for every $e\in E$, and where $(\phi_j)_{j\in \mathbb{N}}$  is an enumeration of the formulas of $\cL$.  Then  for every finite set $X\subseteq S$,
we set
$\phi_X^k=\vee_{e\in X}\phi_e^k$, and we let $p_{X}^k:=\mu(\sem {\vee_{e\in X}\phi_e^k})$. \jo{Since $e\models \phi_e^k$, we have $\sem{\phi_e^k}\supseteq \prec_\epsilon^\mathcal{L}(\{e\})$,  
}
$$p_X^k=\mu(\sem { \vee_{e\in X}\phi_e^k})\geq  \mu(\prec_\epsilon^\mathcal{L}(X))\geq  \mu(X),$$ 
and $s\models\angs a\lbrace(\vee_{e\in X}\phi_e^k,{p_X^k)}\rbrace_{X\subseteq E}$ for all $k\geq 1$.
By hypothesis,  we get that  $t\models_\epsilon\angs a\lbrace(\vee_{e\in X}\phi_e^k,{p_X^k)}\rbrace_{X\subseteq E}$
 for all $k\geq 1$. Let $\nu_E^k$ be the associated transition.
Then for all $X\subseteq E$ and for all $k\in\mathbb{N}$:
$$\nu_E^k(\sem{\vee_{e\in X}\phi_e^k})\geq p_X^k -\epsilon\geq \mu(X)-\epsilon.$$

Now, $\sem{\vee_{e\in X}\phi_e^k}_\epsilon$ is decreasing to $\prec_\epsilon^\mathcal{L}(X)$ as $k$ goes to infinity. Since the systems we consider are finitely branching, we can define a transition $t\stackrel{a}{\rightarrow}\nu_E$ such that $\nu_E$ is the limit of a subsequence of $\lbrace\nu_E^k\rbrace_{k\in\mathbb{N}}$. That is, there exists an increasing function $\psi: \mathbb{N}\rightarrow\mathbb{N}$ such that for all set $Y\subseteq S$ we have $\mathrm{lim}_{k\rightarrow\infty}\nu_E^{\psi(k)}(Y)=\nu_E(Y)$. 
This implies that: $\nu_E(\prec_\epsilon^\mathcal{L}(X))\geq\mu(X)-\epsilon.$
We have proven the following: for all $s\stackrel{a}{\rightarrow}\mu$, for all $E\subseteq S$ finite, there exists $t\stackrel{a}{\rightarrow}\nu_E$ such that for any $X\subseteq E$ we have $\nu_E(\prec_\epsilon^\mathcal{L}(X))\geq\mu(X)-\epsilon$. 
Let $E_k,k\in\mathbb{N}$ be a growing sequence of finite subsets of $S$ such that $S=\cup_{k\in\mathbb{N}}E_k$. Again, since the system is finitely branching, let $\nu$ be the limit of a subsequence of $\lbrace\nu_{E_k}\rbrace_{k\in\mathbb{N}}$. As before, we get that for any $X\subseteq S$ finite, $\nu(\prec_\epsilon^\mathcal{L}(X))\geq\mu(X)-\epsilon$. 

\noindent
$[\sim_{\epsilon}=\sim^{\mathcal{L}^N_\neg}_\epsilon] $
It can be proven that $\sim_\epsilon^{\mathcal{L}^N_\neg}$ is an $\epsilon$-bisimulation by following the proof above and using the fact that $\sim_\epsilon^{\mathcal{L}^N_\neg}$ is a symmetric relation (which comes from the presence of negation in $\mathcal{L}^N_\neg$).
\end{proof}

\section{A Bisimulation Pseudo-Metric between PAs}

\subsection{The pseudo-metric $d$}\label{d-metric}
The notion of $\epsilon$-bisimulation induces a pseudo-metric on states of a PA, given by the smallest $\epsilon$ such that the states are $\epsilon$-bisimilar.
\begin{definition}[Bisimulation metric]
Given $s,t\in S$, let $d(s,t)=\inf\lbrace \epsilon\ |\ s\sim_\epsilon t\rbrace$.
\end{definition}
Using the finite branching of our systems, we can prove $d$ is a bisimulation pseudo-metric, i.e., states at distance zero are bisimilar. We now discuss the  computation of this distance between all states of a given PA. We  propose three  approaches, the first being exact, and the others approximate.  The two first compute the distance iteratively, updating a function $d_i:S\times S\rightarrow[0;1]$, in the same way as it was done for deterministic PAs in \cite{DesLavTra08}. This approach is close to the classical iterative algorithms for computing simulation and bisimulation on probabilistic systems, see \cite{Baier96b,Segala02}. It makes use of a network flow computation.
The algorithm for the first approach is the left one  in Fig.~\ref{algos}. 
 Given $d:S\times S\rightarrow[0;1]$, and $\epsilon\geq0$, let $R^d_\epsilon$ be the relation on $S\times S$ defined as: $R^d_\epsilon(s,t)$ iff $d(s,t)\leq\epsilon$.

\begin{figure}[htbp]
\begin{center}
\begin{minipage}[t]{68mm}
\paragraph{Algorithm $\mathcal{A}$}: exact computation\\
\textbf{Input:} A finite PA $\mathcal{S}=(S,Act,\mathcal{D})$.\\
\textbf{Output:} $d:S\times S\rightarrow[0;1]$.\\
\textbf{Method:}\\
Let $d_0(s,t)=0$ $\forall s,t\in S\times S$. Let $j=0$\\
Until $d_j=d_{j+1}$ do begin:\\
\phantom{m} $d_{j+1}=d_j$.\\
\phantom{m} For all $(s,t)\in S\times S$ do begin:\\
\phantom{mm} For all $a\in Act$ do begin:\\
\phantom{mmm} For all $s\stackrel{a}{\rightarrow}\mu$ do begin:\\
\phantom{mmmm} 
\begin{minipage}{49mm}
Let $d_{j+1}(s,t)$ be the smallest $\epsilon\in[0;1]$ s.t. $\exists \ t\stackrel{a}{\rightarrow}\nu$ such that the maximum flow of network $\mathcal{N}(\mu,\nu,R_\epsilon^{d_j})$ is~$\geq\mu(S)-\epsilon$.
\end{minipage}\\
\phantom{mm} end end end\\
\phantom{m}$j=j+1$
end
return $d_{j-1}$.
\end{minipage}\phantom{mm} 
\begin{minipage}[t]{70mm}
\paragraph{Algorithm $\mathcal{B}$}: computation  up to $1/n$ \\
\textbf{Input:} A finite PA $\mathcal{S}=(S,Act,\mathcal{D})$, $n\in\mathbb N$.\\
\textbf{Output:} $d:S\times S\rightarrow[0;1]$.\\
\textbf{Method:}\\
Let $d_0(s,t)=0$ $\forall s,t\in S\times S$. Let $j=0$\\
For $m=n-1$ to $0$ do begin:\\
Until $d_j=d_{j+1}$ do begin:\\
\phantom{m} $d_{j+1}=d_j$.\\
\phantom{m} For all $(s,t)$ s.t.\  $d(s,t)=0$ do begin:\\
\phantom{mm} For all $a\in Act$ do begin:\\
\phantom{mmm} For all $s\stackrel{a}{\rightarrow}\mu$ do begin:\\
\phantom{mmmm} 
\begin{minipage}{49mm}
If $\exists \ t\stackrel{a}{\rightarrow}\nu$ s.t.\  the maximum flow of network $\mathcal{N}(\mu,\nu,R_0^{d_j})$ is~$<\mu(S)-m\epsilon/n$ then let $d_{j+1}(s,t)=(m+1)\epsilon/n$.
\end{minipage}\\
\phantom{mm} end end end
\\
\phantom{m}$j=j+1$
end  end return $d_{j-1}$.
\end{minipage}
\end{center}
\caption{Computations of metric $d$ on $\mathcal{S}$}
\label{algos}
\end{figure}


\begin{proposition}\label{p-algo correct}
 Algorithm  $\mathcal{A}$ correctly outputs the distance $d$ between all pairs of states in $S$. Moreover, algorithm $\mathcal{A}$ runs in time $O(|S|^9\cdot|Act|\cdot l^2)$, where $l$ is the maximal number of transitions with the same label issued from a single state.
\end{proposition}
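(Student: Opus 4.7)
The plan is a sandwich argument on the iterates: establish $d_j\leq d$ pointwise as a loop invariant, then verify that at convergence $d\leq d^\star$, and finally count operations.

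For the invariant I would proceed by induction on $j$. The base case $d_0\equiv 0$ is immediate. For the step, assuming $d_j\leq d$, we have $R_\epsilon^{d}\subseteq R_\epsilon^{d_j}$ for every $\epsilon$, so by monotonicity of max-flow in the capacities of $\mathcal{N}(\mu,\nu,\cdot)$, any $\nu$ witnessing $\mu\,\mathcal{L}^{d(s,t)}(\sim_{d(s,t)})\,\nu$ (provided by Proposition~\ref{l-difflift} together with $s\sim_{d(s,t)}t$) also witnesses the condition defining $d_{j+1}(s,t)$ with $\epsilon=d(s,t)$; hence $d_{j+1}\leq d$. For the converse, when the loop exits with $d_{j+1}=d_j=:d^\star$, inspection of the update rule shows that for every $(s,t)$, every $s\stackrel{a}{\rightarrow}\mu$ and every $\epsilon\geq d^\star(s,t)$, there is $t\stackrel{a}{\rightarrow}\nu$ with max flow $\geq\mu(S)-\epsilon$ in $\mathcal{N}(\mu,\nu,R_\epsilon^{d^\star})$, i.e.\ $\mu\,\mathcal{L}^\epsilon(R_\epsilon^{d^\star})\,\nu$ by Proposition~\ref{l-difflift}. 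Together with the symmetric statement for the pair $(t,s)$, this makes $R_\epsilon^{d^\star}$ an $\epsilon$-bisimulation containing $(s,t)$; hence $d(s,t)\leq \epsilon$, and letting $\epsilon\searrow d^\star(s,t)$ gives $d\leq d^\star$. Termination is immediate since the reachable values of $d_j$ lie in a finite set of rationals determined by the probabilities of $\mathcal{S}$, and $(d_j)$ is monotonically non-decreasing and bounded.

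For the complexity, each outer iteration processes $|S|^2$ pairs, $|Act|$ actions, and at most $l$ transitions $\mu$ per state. Computing the smallest valid $\epsilon$ reduces to sweeping the $O(|S|^2)$ candidate thresholds in $\{d_j(s',t')\}$, because on each interval between consecutive thresholds the relation $R_\epsilon^{d_j}$ is constant and the inequality ``max flow $\geq\mu(S)-\epsilon$'' is linear in $\epsilon$, so the optimum is attained at a threshold or at $\mu(S)-F_{\max}$ for some $\nu$. At each threshold, one max flow per candidate $\nu$ (at most $l$) on a network with $O(|S|)$ vertices runs in $O(|S|^3)$ time, giving a per-iteration cost of $|S|^2\cdot|Act|\cdot l\cdot|S|^2\cdot l\cdot|S|^3=|S|^7\cdot|Act|\cdot l^2$. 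The main obstacle is bounding the number of iterations by $O(|S|^2)$: this would follow by adapting the classical fixpoint convergence analyses of~\cite{Baier96b,Segala02} and the deterministic case of~\cite{DesLavTra08}, showing that each non-trivial iteration strictly raises some entry of $d_j$ and that the cascade of reachable values is of length $O(|S|^2)$. Multiplying yields the claimed $O(|S|^9\cdot|Act|\cdot l^2)$.
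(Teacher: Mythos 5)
The paper states this proposition without proof, so there is nothing to compare your attempt against; judged on its own merits, your correctness argument is essentially sound but your complexity argument has a gap at exactly the point you flag. The sandwich works: $d_j\leq d$ is preserved because $R^{d}_\epsilon\subseteq R^{d_j}_\epsilon$ and the max flow of $\mathcal{N}(\mu,\nu,\cdot)$ is monotone in the edge set, and at the fixpoint each $R^{d^\star}_\epsilon$ is an $\epsilon$-bisimulation, giving $d\leq d^\star$. Two small points should be made explicit, though: you invoke $s\sim_{d(s,t)}t$, i.e.\ that the infimum defining $d$ is attained (true for finite, finitely branching PAs, but it needs a word, and the same attainment is needed again below); and the bisimulation step needs $R^{d^\star}_\epsilon$ to be symmetric, which requires reading the update of $d_{j+1}(s,t)$ as a maximum over the matching obligations of both $s$ and $t$ (and over all actions and transitions), not as the literal overwrite in the pseudocode.

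The genuine gap is the $O(|S|^2)$ bound on the number of outer iterations. ``Each non-trivial iteration strictly raises some entry'' does not yield it: unlike the classical relational fixpoint, where a pair can be removed only once, here an entry $d_j(s,t)$ is a real number that can be raised many times, through values of the form $\mu(S)-F$ for max flows $F$ of networks built over the (up to exponentially many) relations $R^{d_j}_\epsilon$, so the ``cascade of reachable values'' is not obviously of polynomial length. The bound is nevertheless true, for a different reason: one shows by induction on $j$ that $d_j(s,t)=\inf\lbrace\epsilon\mid s\sim^j_\epsilon t\rbrace$, equivalently that $R^{d_j}_\epsilon=\sim^j_\epsilon$ for every $\epsilon$ (this is where attainment of the infima is used). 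For each fixed $\epsilon$, the sequence $(\sim^j_\epsilon)_j$ is a decreasing chain of subsets of $S\times S$ that becomes constant as soon as two consecutive terms coincide, hence stabilizes by step $|S|^2$; since this holds for every $\epsilon$ simultaneously, $d_{|S|^2+1}=d_{|S|^2}$ and the loop exits after at most $|S|^2+1$ iterations. Combined with your per-iteration count of $O(|S|^7\cdot|Act|\cdot l^2)$ this gives the stated $O(|S|^9\cdot|Act|\cdot l^2)$.
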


This algorithm is quite expensive, and hence we propose  other  approaches that approximate the distance.  The first one is a variation of algorithm $\mathcal A$ and is the right-hand algorithm of  Fig.~\ref{algos}. Let $1/n$ be the accuracy we are interested in, for $n\in\mathbb N$. \jo{The idea is to compute $m/n$-bisimulation  iteratively, for $m$ decreasing from $n-1$ to $0$.  These relations are decreasing as $\epsilon$-bisimilarity implies $\epsilon'$-bisimilarity for any $\epsilon'>\epsilon$. At each iteration of that loop, states whose distance have not been established yet have $d$-value 0.  At step $m<n$, these states will be given distance   $(m+1)/n$ if they are not $m/n$-bisimilar. } The relation consisting of states at zero distance will decrease at every $j$ step.  For every pair of states, the worst number of flow networks to be computed will be $n$ (this happens if the states are bisimilar).
Hence the algorithm runs in $O(|S|^5\cdot n\cdot|Act|\cdot l^2)$.  Of course, some values of $m$ can be ignored and we can save some time.

The last algorithm that we propose uses recent work of Zhang et al.~\cite{zhang2007flow}, to update efficiently the flow computation in  algorithm $\mathcal A$. 
The algorithm of Zhang et al.\  computes strong bisimularity on a probabilistic automaton in time $O(|S|\cdot m^2)$, where $m$ is the total number of transitions. By a slight generalization of this algorithm to our context, we can compute $\epsilon$-bisimularity on $\mathcal{S}$ in time $O(|S|\cdot m^2)$, for any given $\epsilon$. Using a dichotomic approach, given two states $s$ and $t$, we can compute $d(s,t)$ up to an additive approximation factor $\delta$ in time  $O(|S|\cdot m^2\cdot log(\delta))$, and thus we can compute $d$  up to an additive approximation factor $\delta$ in time  $O(|S|^3\cdot m^2\cdot log(\delta))$.

\subsection{The decayed distance $d^\lambda$}\label{s-decay dist}
In the previous definitions, differences in the far future have as much importance as those in the near future. We can relax the impact of the future, or~instead the impact of short term transitions, by using a decayed relaxation. Instead of a fixed relaxation of parameter $\epsilon$, we can ask for a relaxation that changes as we get further from the starting state. As we get deeper through the transitions, the parameter could get bigger, hence diminishing the importance of further differences, or symmetrically, we could make the parameter smaller.  In order to leave this flexible, we will use a function $\lambda: [0;1]\to [0;1]$.  If $x\leq \lambda(x)$, the future will be neglected whereas $x\geq \lambda(x)$ will  make the future more precise. If $\lambda$ is the identity, we get the previous notions.  We will  describe below a natural choice for $\lambda$, but first, let us define the new semantics.  We write  $\lambda^n$ for to the $n$-th self composition of function $\lambda$, $n\in\mathbb{N}$, and $\lambda^0$ is the identity function on $[0;1]$.
Also, $\mathcal{L}^N_n$ will be the set of formulas of $\mathcal{L}^N$ of depth at most  $n\in\mathbb{N}$.


\begin{definition}[The $(\epsilon,\lambda)$-semantics]   Let $\epsilon\in[-1;1]$, and $\lambda: [0;1]\to [0;1]$
The syntax is the one of $\mathcal{L}^N$. The semantics $\models_\epsilon^\lambda$ is defined similarly as for  $\mathcal{L}^N$ except for the modal operator. Given $\phi\in\mathcal{L}^N$, let $\sem \phi _\epsilon^\lambda=\lbrace s\in S|s\models_\epsilon^\lambda\phi\rbrace$. Given $s\in S$, $s\models_\epsilon^\lambda \angs a\lbrace(\phi_i,p_i)\rbrace_{i\in I}$ iff there exists a transition $s\stackrel{a}{\rightarrow}\mu$  such that for all $i\in I$, we have $\mu(\sem {\phi_i}_{\lambda(\epsilon)}^\lambda)\geq p_i-\epsilon$. This semantics induces the  relations $\prec^{\mathcal{L}^N_n,\lambda}_\epsilon$ and $\sim^{\mathcal{L}^{N,\neg}_n,\lambda}_\epsilon$ as in Def.~\ref{d:logicalSim}.
\end{definition}

The relations  $\prec^{\mathcal{L}^N_n,\lambda}_\epsilon$ and $\sim^{\mathcal{L}^{N,\neg}_n,\lambda}_\epsilon$ are defined for formulas of a given maximal depth $n$, because in order to compute $\models^\lambda_\epsilon$ on $S\times S$ for a given $\epsilon\in[0;1]$, we may have to compute the $\models^\lambda_{\lambda^n(\epsilon)}$ for all $n\in\mathbb{N}$. 

Most of the time, one wants to give less importance to the future. In these situations, the decay is called a discount and could be exponential, as in~\cite{Desharnais02,ferns2004metrics}. In our case, this would correspond to asking that  there is a constant $0< c<1$ such that $1-\lambda(\epsilon)=c(1-\epsilon)$, i.e. $\lambda(\epsilon)=1-c\cdot(1-\epsilon)$. 

The associated simulation and bisimulation  are variations of Def.~\ref{d:eSim}:


\begin{definition}[Order $n$ $(\epsilon,\lambda)$-bi/simulation]\label{d:DecayeSim}
Given $n\in\mathbb{N}$, an \emph{order} $n$ $(\epsilon,\lambda)$ \emph{simulation} on $\mathcal{S}$ is a decreasing sequence of relations $R_0,...,R_n$ on $S$ such that $R_0=S\times S$, and for all $i\in[1;n]$, whenever $sR_i t$, if $s\stackrel{a}{\rightarrow}\mu$, then there exists  $t\stackrel{a}{\rightarrow}\nu$ such that $\mu\,\mathcal{L}^{\lambda^{n-i}(\epsilon)}(R_{i-1})\nu$.  
We write $s\prec^{\lambda,n}_\epsilon t$ if there exists an order $n$ $(\epsilon,\lambda)$-simulation $R_n$ on $\mathcal{S}$ such that $s R_n t$, and we write $s\sim^{\lambda,n}_\epsilon t$ if $s\prec^{\lambda,n}_\epsilon t$ and $t\prec^{\lambda,n}_\epsilon s$.
\end{definition}


\begin{proposition}\label{p-equiv dec}
Let $n\in\mathbb{N}$. Then
$\prec^{\mathcal{L}^{N}_n,\lambda}_\epsilon \;=\;\prec^{\lambda,n}_\epsilon $ and 
$\sim^{L^{\mathcal{N},\neg}_n,\lambda}_\epsilon \  \;=\;\ \sim^{\lambda,n}_\epsilon $
\end{proposition}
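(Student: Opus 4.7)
The plan is to prove both equalities by induction on $n$, mirroring the argument for Theorem~\ref{t-equiv sim ND} but carefully tracking how the depth bound interacts with the decay function $\lambda$ at each modal step. The base case $n=0$ is immediate: the formulas of $\mathcal{L}^N$ of depth $0$ are Boolean combinations of $\top$, and an order-$0$ $(\epsilon,\lambda)$-simulation is just the single relation $R_0=S\times S$, so both sides relate every pair of states.

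For the inductive step, I would handle the two inclusions separately, first for $\prec$; the bisimulation version $\sim$ will then follow by running the same argument on symmetric relations, using that negation in $\mathcal{L}^{N,\neg}$ forces $\sim^{\mathcal{L}^{N,\neg}_n,\lambda}_\epsilon$ to be symmetric, exactly as in Theorem~\ref{t-equiv sim ND}. For the sound direction $\prec^{\lambda,n}_\epsilon\subseteq\prec^{\mathcal{L}^N_n,\lambda}_\epsilon$, given a witnessing chain $R_0,\ldots,R_n$, I would prove by structural induction on $\phi\in\mathcal{L}^N$ that for every $i$ and every $\phi$ of depth at most $i$, $R_i(\sem\phi)\subseteq\sem\phi^\lambda_{\lambda^{n-i}(\epsilon)}$. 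The only nontrivial case is $\phi=\angs a\lbrace(\phi_j,p_j)\rbrace_{j\in J}$ of depth $i$: when $sR_it$ and $s\stackrel{a}{\to}\mu$ witnesses $s\models\phi$, Def.~\ref{d:DecayeSim} supplies $t\stackrel{a}{\to}\nu$ with $\mu\,\mathcal{L}^{\lambda^{n-i}(\epsilon)}(R_{i-1})\,\nu$; since each $\phi_j$ has depth at most $i-1$, the induction hypothesis gives $R_{i-1}(\sem{\phi_j})\subseteq\sem{\phi_j}^\lambda_{\lambda^{n-i+1}(\epsilon)}$, and chaining the two inclusions yields $\nu(\sem{\phi_j}^\lambda_{\lambda(\lambda^{n-i}(\epsilon))})\geq p_j-\lambda^{n-i}(\epsilon)$, which is exactly the semantic clause for $t\models^\lambda_{\lambda^{n-i}(\epsilon)}\phi$.

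For the converse $\prec^{\mathcal{L}^N_n,\lambda}_\epsilon\subseteq\prec^{\lambda,n}_\epsilon$, I would define the candidate chain by $R_i=\prec^{\mathcal{L}^N_i,\lambda}_{\lambda^{n-i}(\epsilon)}$, so that $R_n=\prec^{\mathcal{L}^N_n,\lambda}_\epsilon$ and $R_0=S\times S$. In the discounting setting (where $\lambda(x)\geq x$), the chain is automatically decreasing because satisfaction is monotone in the relaxation parameter. The simulation clause then follows level by level by transcribing the $\prec^{\mathcal{L}^N}_\epsilon\subseteq\prec_\epsilon$ argument of Theorem~\ref{t-equiv sim ND}: enumerate the $\mathcal{L}^N$-formulas of depth at most $i-1$, form the finite conjunctions $\phi_e^k$ and disjunctions $\phi_X^k=\vee_{e\in X}\phi_e^k$ indexed by subsets of a finite $E$, apply the formula $\angs a\lbrace(\phi_X^k,p_X^k)\rbrace_{X\subseteq E}$ (of depth exactly $i$) at relaxation $\lambda^{n-i}(\epsilon)$ to obtain witnesses $\nu_E^k$, and extract, by finite branching, a limit transition $t\stackrel{a}{\to}\nu$ satisfying $\nu(R_{i-1}(X))\geq\mu(X)-\lambda^{n-i}(\epsilon)$ for every finite $X$.

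The main obstacle I anticipate is the bookkeeping of decay indices across the two nested inductions. At each modal step the relaxation shifts from $\lambda^{n-i}(\epsilon)$ to $\lambda^{n-i+1}(\epsilon)$, and one must verify both that the chain $(R_i)$ really is decreasing (in the non-discounting regime, passing to $\bigcap_{j\geq i}R_j$ may be needed to enforce this without changing the intended relation) and that the modal formula witnessing each level has depth exactly $i$ rather than $i+1$, which is what permits the induction to close without invoking an unbounded fixpoint argument.
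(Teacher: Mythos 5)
The paper states this proposition without proof, treating it as a routine adaptation of Theorem~\ref{t-equiv sim ND}; your proposal carries out exactly that adaptation, and your bookkeeping of the decay indices is right: at level $i$ the relaxation is $\lambda^{n-i}(\epsilon)$, a modal step shifts it to $\lambda^{n-i+1}(\epsilon)=\lambda(\lambda^{n-i}(\epsilon))$, which is precisely what the $(\epsilon,\lambda)$-semantic clause and the clause $\mu\,\mathcal{L}^{\lambda^{n-i}(\epsilon)}(R_{i-1})\,\nu$ of Def.~\ref{d:DecayeSim} require, so both the soundness induction and the formula-construction argument for the converse go through, and the bisimulation case follows from symmetry as in Theorem~\ref{t-equiv sim ND}.

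The one point that deserves more than a flag is the requirement in Def.~\ref{d:DecayeSim} that the chain $R_0,\dots,R_n$ be \emph{decreasing}. Your candidate $R_i=\prec^{\mathcal{L}^N_i,\lambda}_{\lambda^{n-i}(\epsilon)}$ is decreasing exactly when satisfaction is monotone along the parameters $\lambda^{n-i}(\epsilon)$, i.e.\ when $\lambda(x)\geq x$; this covers the discounting regime, which is the only one the paper actually instantiates ($\lambda(\epsilon)=1-c(1-\epsilon)$ and $\lambda(x)=x+\delta$). Outside that regime your fallback of replacing $R_i$ by $\bigcap_{j\geq i}R_j$ does not obviously work: shrinking $R_{i-1}$ \emph{strengthens} the condition $\mu(E)\leq\nu(R_{i-1}(E))+\lambda^{n-i}(\epsilon)$, so the simulation clause is not automatically preserved by the intersection. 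Since the proposition as stated does not restrict $\lambda$, either add the hypothesis $\lambda(x)\geq x$ or supply a genuine argument for the non-monotone case; with that restriction your proof is complete.
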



Given $s,t\in S$, we define $d^\lambda(s,t)=\min\lbrace \lambda^{-n}(1)\ |\ s\sim^{\lambda,n}_{\lambda^{-n}(1)} t\rbrace$. We can compute the distance $d^\lambda$ using the algorithm of Fig.~\ref{algosDecay}.
\begin{figure}[htbp]
\begin{center}
\begin{minipage}{11cm}
\textbf{Algorithm $\mathcal{C}$}: Computation of the discounted metric $d_\lambda$ on $\mathcal{S}$\\
\textbf{Input:} A finite PA $\mathcal{S}=(S,Act,\mathcal{D})$ $N\in\mathbb{N}$.\\
\textbf{Output:} $d:S\times S\rightarrow[0;1]$.\\
\textbf{Method:}\\
Let $d_0(s,t)=1$ for all $s,t\in S\times S$. Let $\delta=1/N$.\\
Let $R_0=S\times S$.\\
For $n=1$ to $N$ do begin:\\
\phantom{mi} For all $(s,t)\in S\times S$ do begin:\\
\phantom{mmmm} For all $a\in Act$ do begin:\\
\phantom{mmmmmm} For all $s\stackrel{a}{\rightarrow}\mu$ do begin:\\
\phantom{mmmmmmmm} 
\begin{minipage}{90mm}
Let $R_{n+1}=R_n$.\\
If there exists no transition $t\stackrel{a}{\rightarrow}\nu$ such that the maximum flow of the network $\mathcal{N}(\mu,\nu,R_n)$ is greater than or equal to $\mu(S)-(1-n\cdot\delta)$, then: let $d(s,t)=1-n\cdot\delta$,  and  let  $R_{n+1}=R_{n+1}-\lbrace (s,t)\rbrace$.
\end{minipage}\\
\phantom{mi} end end end
end\\
return $d$.
\end{minipage}
\end{center}
\caption{Computation of metric $d_\lambda$ on $\mathcal{S}$}
\label{algosDecay}
\end{figure}

\begin{proposition}
Algorithm $\mathcal{C}$ of Fig.~\ref{algosDecay} runs in time $O(|S|^5\cdot|Act|\cdot l^2)$.
\end{proposition}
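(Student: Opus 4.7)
The plan is a loop-by-loop cost count for Algorithm $\mathcal{C}$. The dominant primitive is a maximum-flow computation on the network $\mathcal{N}(\mu,\nu,R_n)$, which has $2|S|+2$ vertices and at most $|S|^2+2|S|$ edges (the at most $|S|^2$ middle edges encoding $R_n$, plus $2|S|$ source/sink edges); a classical max-flow algorithm solves one such instance in polynomial time in $|S|$.

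First I would count the number of flow calls per iteration. Fixing the outer index $n$, the algorithm iterates over pairs $(s,t)\in S\times S$ (a factor of $|S|^2$), over labels $a\in Act$ (a factor of $|Act|$), over transitions $s\stackrel{a}{\to}\mu$ (at most $l$ of them), and -- inside the existence test for a witness $t\stackrel{a}{\to}\nu$ -- over at most $l$ candidate matching transitions from $t$. Hence each outer iteration triggers $|S|^2\cdot|Act|\cdot l^2$ max-flow computations. Next I would bound the useful number of outer iterations. The relation $R_n$ is monotonically non-increasing, starts with $|S|^2$ pairs, and can be decreased at most $|S|^2$ times; iterations in which $R_n$ is unchanged reproduce the same flow values on every $\mathcal{N}(\mu,\nu,R_n)$ and hence create no new work.

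Combining the two estimates with an amortized max-flow cost of $O(|S|)$ per call -- obtained using the incremental max-flow technique of~\cite{zhang2007flow} already invoked elsewhere in the paper, which exploits that as $n$ grows edges of $\mathcal{N}(\mu,\nu,R_n)$ only \emph{disappear} -- one gets $O(|S|^2)$ useful outer iterations times $|S|^2\cdot|Act|\cdot l^2$ flow calls per iteration times $O(|S|)$ amortized cost per flow, that is $O(|S|^5\cdot|Act|\cdot l^2)$.

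The main obstacle is precisely this amortized accounting: a naive bound of $O(|S|^3)$ per stand-alone flow call would add an unwanted factor of $|S|^2$. Two implementation-level observations have to be folded in: (i) memoization of flow values across outer iterations where $R_n$ is unchanged, and (ii) incremental flow updates on the sequence of monotonically shrinking networks. Both follow in a routine way from the techniques of~\cite{zhang2007flow}, so the only substantive step is verifying that these optimizations are compatible with the control flow of Algorithm~$\mathcal{C}$ as written.
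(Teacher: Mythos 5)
Your per-round count of max-flow calls ($|S|^2\cdot|Act|\cdot l^2$, one for each pair $(s,t)$, label $a$, and pair of candidate transitions $\mu,\nu$) and the observation that each network $\mathcal{N}(\mu,\nu,R_n)$ has only $O(|S|)$ vertices are exactly the ingredients the paper relies on: its one-line proof is the direct count, charging a standard $O(|S|^3)$ for each stand-alone maximum-flow computation and not charging for the number $N$ of rounds of the outer loop (which is an input parameter, treated as a constant since it does not appear in the stated bound). That already gives $O(|S|^2\cdot|Act|\cdot l^2)\cdot O(|S|^3)=O(|S|^5\cdot|Act|\cdot l^2)$ with no amortization whatsoever.

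The genuine gap in your argument is the step that makes your alternative factorization come out right: the claimed $O(|S|)$ amortized cost per max-flow call. This claim carries the entire proof --- with $O(|S|^2)$ ``useful'' rounds and $|S|^2\cdot|Act|\cdot l^2$ calls per round, a standard $O(|S|^3)$ per call would give $O(|S|^7\cdot|Act|\cdot l^2)$, so you are shaving a factor of $|S|^2$ off the flow cost precisely to pay for the factor of $|S|^2$ you introduced by bounding the rounds --- yet it is only asserted, with a pointer to \cite{zhang2007flow} and the word ``routine''. The preservation-of-maximum-flow results there bound the \emph{total} update cost of one network over a monotone sequence of edge deletions inside their own simulation algorithm; extracting a per-call $O(|S|)$ figure valid for the deletion schedule of Algorithm $\mathcal{C}$ is not routine and would have to be proved. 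Note also that both of your implementation-level observations describe a modified algorithm rather than Algorithm $\mathcal{C}$ as written, and that a round with $R_n=R_{n-1}$ is not a no-op: the acceptance threshold $\mu(S)-(1-n\delta)$ changes with $n$, so every pair must be retested even though the flow \emph{values} could be reused. Under the paper's (implicit) convention that $N$ is constant, the direct count suffices and none of this machinery is needed; if instead you insist on a bound uniform in $N$, your proof as it stands does not deliver one.
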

\begin{proof}
Direct, using $O(|S|^3)$ flow network computations.
\end{proof}

\subsection{Comparison to other metrics on probabilistic systems}
During the past ten years, several metrics have been defined in the context of Probabilistic Automata or closely related models such as Labeled Markov Chains \cite{Desharnais99b,Ferns05,DesLavZhi06}, reactive probabilistic transition systems \cite{breugel05}, Markov Decision Processes \cite{ferns2004metrics,DesLavZhi06,Puterman:Book}, or more general game processes \cite{Alfaro07}. Most of these metrics are variations of the metric of \cite{Desharnais99b}. 
In \cite[Th.~4.6]{breugel05}, an equivalent metric is defined as a terminal coalgebra, using category theory. In \cite{Worrell01}, the same authors give an algorithm to compute in polynomial time this metric, relying on linear programming computation for a transshipment optimization problem. This approach is for deterministic models and it is applied in \cite{ferns2004metrics} and related papers to compute metrics between Markov Decision Processes. \jo{Most of} these algorithms introduce a decay factor to make the computation tractable. In \cite{Alfaro07} the authors consider metrics between systems which allow non determinism, but the complexity of the algorithms presented in \cite{chatterjee2008algorithms} to compute the metrics is at best PSPACE.

 The main difference between our metric and  those is  that differences along paths are not accumulated  in ours, even in the discounted metric: other metrics all involve comparing (among others) the probability of paths, and this makes these metrics straightforwardly different from ours, as we never multiply probability values. In \cite{Desharnais99b,ferns2004metrics}, the metric can be computed using a familly of functional expressions $\mathcal{F}^c$ from  states to $[0;1]$ that play the same role as the quantitative formulas of~\cite{Alfaro07}.
Given $s,t$ states of the system, the distance $d^c(s,t)$ is then defined as 
$d^c(s,t)=\sup_{f\in\mathcal{F}^c}|f^c(s)-f^c(t)|$. This distance is incomparable with ours, as shows the following example.

\begin{example}\label{e:differentmetrics}
 Let $0<\epsilon<1$, and consider the systems of Fig.~\ref{f:differentmetric}.
\begin{figure}
$$\xymatrix{
		&	s\ar[dl]_{a,1-\epsilon}\ar[d]^{a,\epsilon}&&t\ar[dl]_{a,1-\epsilon}\ar[d]^{a,\epsilon}&&&u\ar[d]_{a,1}\\
s_1	&	s_2\ar[d]^{b,1-\epsilon}							&t_1	&	t_2\ar[d]^{b,1 }	&&& u_1\ar[d]^{b,1/2}\\
		&	s_3						&								&t_3 								&&&u_3\\
}
\xymatrix{
&v\ar[dl]_{a,\epsilon}\ar[d]^{a,1-\epsilon}\\v_1&v_2\ar[d]^{b,1/2-\epsilon}\\&v_3
}$$
\caption{Our metric differs from those based on paths.}
\label{f:differentmetric}
\end{figure}
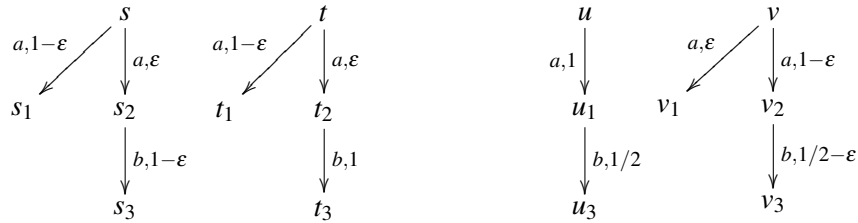
The distance between $s$ and $t$ is always greater with our distance, as $d^c(s,t)= c^2\cdot\epsilon^2\leq\epsilon=d(s,t)$. On the contrary, for $\epsilon<1/2$, there is some $c$ for which the distance between $u$ and $v$ is smaller with our distance.
Indeed, $d^c(u,v)>=c^2(\epsilon/2-\epsilon^2)$, whereas $d(u,v)=\epsilon$.  Hence we obtain $d(u,v)\leq d^c(u,v)$ by taking $c\in[0;1]$ such that $c^2(3\epsilon/2-\epsilon^2)>\epsilon$;  for example, with $\epsilon=1/4$ and $c^2>4/5$.  The example can be adapted for $\epsilon\geq 1/2$.
\end{example}




\subsection{The Metric $d$ on a Process Algebra}
We consider a process algebra on an extension of the model of PAs: the extension is that we distinguish between \emph{Input} and \emph{Output} action labels. As in \cite{Ferns05}, we assume a set of underlying labels $\Sigma$, and suppose that the labels of the PAs belong to a set $L=L!\cup L?$, where $L?=\lbrace a?|a\in\Sigma\rbrace$ and $L!=\lbrace a!|a\in\Sigma\rbrace$ are the sets of Input and Ouput labels respectivelly. Two PAs $\mathcal{S}_1=(S_1,Act_1,\mathcal{D}_1,s^0_1)$ and $\mathcal{S}_2=(S_2,Act_2,\mathcal{D}_2,s^0_2)$ will synchronize on labels in $Act_1\cap Act_2$.

We only present the operators for non-deterministic choice and parallel composition, as the other operators can be taken as in \cite{Ferns05}. 

\textbf{Non Deterministic Choice.} Let $\mathcal{S}_1,...,\mathcal{S}_k$ be PAs with respective state spaces $S_1,...,S_k$ and initial states $s^0_1,...,s^0_k$. Let $\lbrace a_1,...,a_l\rbrace\subseteq L$, and for each $a_i$ let $\lbrace \mu_{i,j}|j\in[1;n_i]\rbrace$ be a finite family of distributions on $\lbrace s^0_1,...,s^0_k\rbrace$. We define: $\mathcal{S}'={\ang + \ }_{i=1}^l\lbrace s\stackrel{a_i}{\rightarrow}\mu_{i,j}, j\in[1;n_i]\rbrace\lbrace \mathcal{S}_1,...,\mathcal{S}_k\rbrace$,
a PA whose state space is $S'=\lbrace s\rbrace\uplus_{i\in[1;k]} S_i$, and  initial state $s$.  Transitions from $s$ are all the $s\stackrel{a_i}{\rightarrow}\mu_{i,j}$. There may be several transitions from $s$ with the same label $a_i$. $\mathcal{S}'$ accepts the input label $a$ or outputs the label $a$, depending on $a\in L?$ or $a\in L!$.

\textbf{Parallel Composition.} Given PAs $\mathcal{S}_i=(S_1,Act_i,\mathcal{D}_i,s^0_i)$, $i=1,2$, we define the parallel composition $\mathcal{S}'=\mathcal{S}_1||\mathcal{S}_2$. The synchronisation is on labels in $Act_1\cap Act_2$. The state space of $\mathcal{S}'$ is $S'=S_1\times S_2$, with initial state  $s'^0=(s^0_1,s^0_2)$. The set of labels of $\mathcal{S}'$ is $Act'=Act_1\cup Act_2$. 
Given $\mu$ and $\nu$ two distributions on disjoint sets $S_1$ and $S_2$, given $X\subseteq S_1\times S_2$, let $\mu\otimes\nu(X)=\sum_{(s,t)\in X}\mu(s)\cdot\nu(t)$. Given states $s\in S_1$ and $t\in S_2$, we expect the following synchronized transitions:
\begin{itemize}
\item \textit{Synchronization on input labels in $Act_1\cap Act_2$}: $\forall a\in Act_1\cap Act_2$, if  $s\stackrel{a?}{\rightarrow}\mu$ and $t\stackrel{a?}{\rightarrow}\nu$, then there is a transition $(s,t)\stackrel{a?}{\rightarrow}\mu\otimes\nu$ on $\mathcal{S}$.
 \item \textit{Synchronization in the Output/Input}: if $s\stackrel{a!}{\rightarrow}\mu$ and $ t\stackrel{a?}{\rightarrow}\nu$, then there is a transition $(s,t)\stackrel{a!}{\rightarrow}\mu\otimes\nu$ (and symmetrically)
\item \textit{Asynchronous evolution on labels in $Act_1\setminus Act_2$}: given $a\in Act_1\setminus Act_2$ 
if $s\stackrel{a}{\rightarrow}\mu$, then  $(s,t)\stackrel{a}{\rightarrow}\mu\otimes\delta_t$, where $\delta_t$ is the Dirac distribution on  $t$ (symmetrically on $S_2$). 
\end{itemize}

We prove that the distance $d$ is non expansive with respect to the parallel operator: when composing two processes with a third one, the distance does not increase. Non expansiveness with respect to other operators is more common.
\begin{theorem}\label{t-process algebra}
$d(\mathcal{S}_1||\mathcal{S},\mathcal{S}_2||\mathcal{S})\leq d(\mathcal{S}_1,\mathcal{S}_2)$.
\end{theorem}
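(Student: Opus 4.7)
The plan is to lift any $\epsilon$-bisimulation between $\mathcal{S}_1$ and $\mathcal{S}_2$ to an $\epsilon$-bisimulation between the parallel compositions with $\mathcal{S}$. Fix any $\epsilon>d(\mathcal{S}_1,\mathcal{S}_2)$ and let $R\subseteq(S_1\uplus S_2)\times(S_1\uplus S_2)$ be a (symmetric) $\epsilon$-bisimulation in $\mathcal{S}_1\uplus\mathcal{S}_2$ with $s^0_1\,R\,s^0_2$. Define
\[
R'=\{\,((s_1,s),(s_2,s))\mid s_1\,R\,s_2,\ s\in S\,\},
\]
viewed as a symmetric relation on the state space of $(\mathcal{S}_1\|\mathcal{S})\uplus(\mathcal{S}_2\|\mathcal{S})$. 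We then show $R'$ is an $\epsilon$-bisimulation, so that $(s^0_1,s^0)\sim_\epsilon(s^0_2,s^0)$, and hence $d(\mathcal{S}_1\|\mathcal{S},\mathcal{S}_2\|\mathcal{S})\leq\epsilon$; letting $\epsilon\to d(\mathcal{S}_1,\mathcal{S}_2)$ gives the result.

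The verification proceeds by case analysis on the type of transition from $(s_1,s)$. In every case I would exhibit a matching transition from $(s_2,s)$ and build an $\epsilon$-weight function for the target sub-distributions with respect to $R'$, invoking Proposition~\ref{l-difflift} to conclude $\mathcal{L}^\epsilon(R')$ lifting. The decisive case is a synchronizing transition $(s_1,s)\stackrel{a}{\to}\mu_1\otimes\mu$ arising from $s_1\stackrel{a}{\to}\mu_1$ and $s\stackrel{a}{\to}\mu$; by $s_1\,R\,s_2$ there is $s_2\stackrel{a}{\to}\mu_2$ with $\mu_1\,\mathcal{L}^\epsilon(R)\,\mu_2$, witnessed by some $\epsilon$-weight function $\delta:S_1\times S_2\to[0;1]$. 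I would then set
\[
\delta'\bigl((x,y),(x',y')\bigr)=\begin{cases}\delta(x,x')\cdot\mu(y)&\text{if }y=y',\\ 0&\text{otherwise,}\end{cases}
\]
and check the three weight-function conditions for $(\mu_1\otimes\mu,\mu_2\otimes\mu)$ with respect to $R'$: support inside $R'$ is immediate; the two marginal inequalities follow by factoring out $\mu(y)$ and using the marginals of $\delta$; the total mass satisfies
\[
\sum\delta'=\mu(S)\cdot\sum\delta\;\geq\;\mu(S)\cdot\bigl(\mu_1(S_1)-\epsilon\bigr)\;\geq\;(\mu_1\otimes\mu)(S_1\times S)-\epsilon,
\]
where the last step uses $\mu(S)\leq1$.

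The remaining cases are structurally identical but easier. For asynchronous moves $(s_1,s)\stackrel{a}{\to}\mu_1\otimes\delta_s$ with $a\in Act_1\setminus Act$, the same construction with $\mu=\delta_s$ applies. For asynchronous moves $(s_1,s)\stackrel{a}{\to}\delta_{s_1}\otimes\mu$ with $a\in Act\setminus Act_1$, the matching transition is $(s_2,s)\stackrel{a}{\to}\delta_{s_2}\otimes\mu$ (no use of $R$ on distributions is needed since $s_1\,R\,s_2$ already holds by hypothesis), and the diagonal weight function $\delta'((s_1,y),(s_2,y))=\mu(y)$ is easily checked. The Output/Input synchronization case is handled exactly as the main case, by just tracking the $!/?$ decoration on $a$. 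Symmetry of $R'$ (inherited from $R$) takes care of bisimulation rather than mere simulation.

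The main obstacle I foresee is the bookkeeping in the product weight function: getting the inequality on the total mass right hinges on $\mu$ being a sub-probability (so $\mu(S)\leq1$), which turns an ``$\epsilon\cdot\mu(S)$'' slack into the required ``$\epsilon$'' slack. Beyond that, the argument is mostly a careful unrolling of the definitions of $\|$ and of $\mathcal{L}^\epsilon$.
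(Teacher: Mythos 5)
Your proof is correct, but it takes a genuinely different route from the paper's. The paper argues through the logical characterization of Theorem~\ref{t-equiv sim ND}: it shows by structural induction on formulas of $\mathcal{L}^{N}$ that $s_1\times s\models\phi$ implies $s_2\times s\models_\epsilon\phi$, decomposing $\mu_1\otimes\nu(\sem{\phi_i})$ as a double sum over ``left'' and ``right'' slices of $\sem{\phi_i}$ and pushing the $\epsilon$-bound through each slice. You instead work purely relationally: you exhibit the candidate bisimulation $R'$ on the product state space and verify the lifting condition via Proposition~\ref{l-difflift}, building the product weight function $\delta'$, diagonal in the $\mathcal{S}$-component. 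The two arguments hinge on the same quantitative fact --- the slack only degrades from $\epsilon$ to $\epsilon\cdot\mu(S)\leq\epsilon$ because $\mu$ is a sub-distribution --- which in the paper appears as the factor $\sum_v\nu(v)\leq 1$ in the double sum, and in yours as the final inequality on the total mass of $\delta'$. Your version is more elementary (it bypasses the logical characterization entirely, needing only Prop.~\ref{l-difflift}) and makes the $\epsilon$-bookkeeping fully explicit and checkable, whereas the paper's reuses machinery it has already built. One caveat shared by both proofs: the case analysis silently assumes that each action triggers the same composition rule on the $\mathcal{S}_1\|\mathcal{S}$ and $\mathcal{S}_2\|\mathcal{S}$ sides (in effect $Act_1=Act_2$); an action in $Act\cap Act_2\setminus Act_1$ would let $(s_1,s)$ move asynchronously while $(s_2,s)$ is forced to synchronize and may be blocked, and neither argument addresses this, so it is not a defect of your proposal relative to the paper.
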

\begin{proof}
 Let $s,s_1,s_2$ be three states of $\mathcal{S},\mathcal{S}_1$ and $\mathcal{S}_2$ respectivelly. Let $\epsilon\geq0$, and suppose $s_1\sim_\epsilon s_2$. We want to prove that $s_1\times s$, which is a state of $\mathcal{S}_1||\mathcal{S}$, is $\epsilon$-bisimilar to $s_2\times s$, a state of $\mathcal{S}_2||\mathcal{S}$. We prove by induction on $n\in\mathbb{N}$ that if $s_1\sim^n_\epsilon s_2$, then $s_1\times s\sim^n_\epsilon s_2\times s$. We will use the following notations: Given $\phi\in\mathcal{L}^N$, and $j\in\lbrace 1,2\rbrace$:
 
\[\sem{\phi}_{\mathcal{S}_j||\mathcal{S}}=\lbrace s_j\times s|s_j\in S_j,\ s\in S,\ \mbox{and}\ s_j\times s\models\phi\]
Where the semantics is taken on the PA $\mathcal{S}_j||\mathcal{S}$. Let
\[\LEFT_j({\sem{\phi}}_{\mathcal{S}_j||\mathcal{S}})=\lbrace v\in S_j|\exists u\in S\ s.t.\ u\times v\models\phi\rbrace\]
Given $v\in S$, let
\[\RIGHT_j(v,{\sem{\phi}}_{\mathcal{S}_j||\mathcal{S}})=\lbrace u\in S_1| u\times v\models\phi\rbrace\]

The key case is when $\phi=\angs a\lbrace(\phi_i,p_i)\rbrace_{i\in I}\in\mathcal{L}^{N}$, with depth $n$. Suppose $s_1\times s\models\phi$. Then there exists a transition $s_1\times s\stackrel{a}{\rightarrow}\mu_1\otimes\nu$ on $\mathcal{S}'$ such that for all $i\in I$, $(\mu_1\otimes\nu)(\sem {\phi_i})\geq p_i$.

By hypothesis, $s_1\sim^n_\epsilon s_2$. Hence, there exists a transition $s_2\stackrel{a}{\rightarrow}\mu_2$ such that $\mu_1\mathcal{L}^\epsilon(\sim_\epsilon)\mu_2$. 

We know that for all $i\in I$, 
\[\mu_1\otimes\nu(\sem{\phi_i})=\sum_{v\in \LEFT_1(\sem{\phi_i})}\nu(v)\cdot\sum_{u\in \RIGHT_1(v,\sem{\phi_i})}\mu_1(v).\]
Given $v\in \LEFT(\sem{\phi_i})$ (hence $v\in S$), we know that $\mu_2(\sim^n_\epsilon \RIGHT_1(v,\sem{\phi_i}))\geq \mu_1(\RIGHT_1(v,\sem{\phi_i}))-\epsilon$. 

Moreover, by induction hypothesis, $\sim^n_\epsilon \RIGHT_1(v,\sem{\phi_i})\subseteq \RIGHT_2(v,{\sem{\phi_i}}_\epsilon$. Indeed, 
\[\sim^n_\epsilon \RIGHT_1(v,\sem{\phi_i})\cap S_2=\sim^n_\epsilon\lbrace u\in S_2|\exists u'\in S_1\ s.t.\ u\sim^n_\epsilon u'\ \mbox{and}\ u'\models\phi_i.\]
We get the result since by induction hypothesis, if $u\sim^n_\epsilon u'$ and $s'\in S$, we have $u\times s'\sim^n_\epsilon u'\times s'$.

This implies that $\mu_2(\RIGHT_2(v,{\sem{\phi_i}}_\epsilon))\geq \mu_1(\RIGHT_1(v,\sem{\phi_i}))-\epsilon$. Finally, 
$$\sum_{v\in \LEFT_2(\sem{\phi_i}_\epsilon)}\nu(v)\cdot\sum_{u\in \RIGHT_2(v,\sem{\phi_i}_\epsilon)}\mu_2(v)\geq \sum_{v\in \LEFT_1(\sem{\phi_i})}\nu(v)\cdot\sum_{u\in \RIGHT_1(v,\sem{\phi_i})}\mu_1(v) -\epsilon.$$ Hence $\mu_2\otimes\nu(\sem{\phi_i}_\epsilon)\geq \mu_1\otimes\nu(\sem{\phi_i})-\epsilon$. This proves that $s_2\times s\models_\epsilon\phi$.
\end{proof}

\section{Examples}

We build a benchmark set of deterministic PAs to compare distances. The processes are variations of a basic one from which we delete some transitions. The state space of the basic PA is a square grid of $n\times n$. The set of actions is $\lbrace a\rbrace$.	All the computations on the state indices  are done modulo $n$:  the grid is a torus.
 The $a$-transitions from state $(i,j)$ are as~follows:
\begin{center}
\begin{tabular}{|c||c|c|c|c|}\hline
~from $(i,j)$ to~~ 	& $\;(i,j-1):0.1\;$ & $\;(i,j+1):0.5\;$ & $\;(i-1,j):0.25\;$& $\;(i+1,j):0.15\;$\\\hline
\end{tabular}
\end{center}
This basic  PA is compared to variations of it obtained by deleting in some~states the transition of label $a$ (to all successors).  Note that the basic process is bisimilar to the one-state process that can do $a$ with probability 1. 
We consider the distances between states with same indices of the different systems.  Fig.~\ref{fig:ex} 
\begin{figure}[tb]
    \centering
    \begin{tabular}{ccccccc}
   
      \includegraphics[height=7cm]{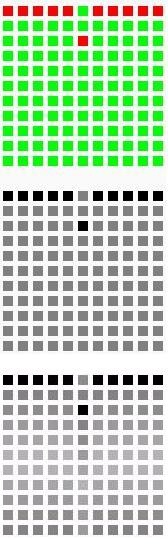}& $  $\includegraphics[height=7cm]{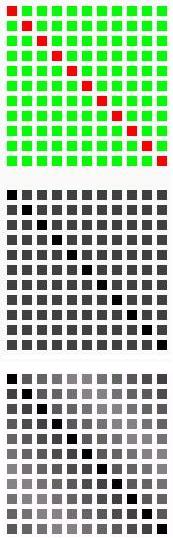}& $  $\includegraphics[height=7cm]{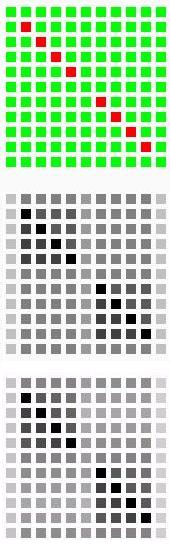}& 
      \includegraphics[height=7cm]{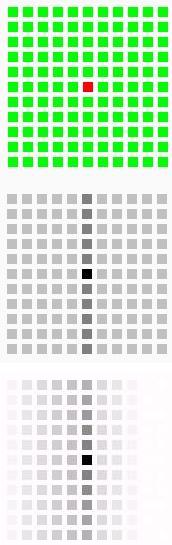}& $   $\includegraphics[height=7cm]{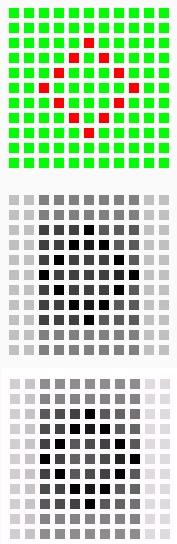}& 
      \includegraphics[height=7cm]{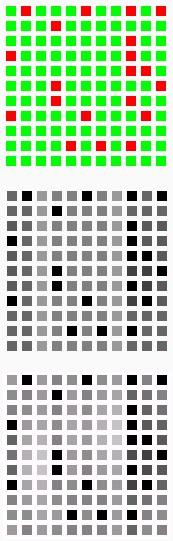}\\
    \end{tabular}
    \caption{The red entries of the top grid are states that have lost their $a$-transition. The two other grids have at entry $(i,j)$ the distance between state $(i,j)$ of top grid to state $(i,j)$ in the basic PA. The darker is the entry, the bigger the distance. The middle grid illustrates the basic distance, the other the decayed one.
    \label{fig:ex}}
\end{figure}
illustrates some PAs and the impact of the deletion of transitions  for the two distances that we defined. 
The distance $d^\lambda$ is illustrated in the bottom grids of the figure. The  linear function for $\lambda$ is the following, with $N=20$. We take   $\delta=1/N$, and let $\lambda(x)=x+\delta$ if $x\in[0;1-\delta]$, and $\lambda(x)=1$ if $x\in[1-\delta;1]$.  One can observe that the decay distances fade out when further from the difference, whereas $d$ is more constant. 

It would be nice to compare these grids  with  others obtained from other known metrics.  We leave that for future work, as we have no implementation of other metrics that can handle more than 25 states.

\section{Conclusion}
We presented relaxed notions of $\epsilon$-simulation and $\epsilon$-bisimulation. When $\epsilon=0$ we retrieve the usual notions of bisimulation and simulation on PAs. We gave logical characterisations of these notions and algorithms  to compute in PTIME two corresponding pseudo-metrics, one that discounts the future, and one that does not. We showed that our distance is not expansive with respect to process algebra operators. We also showed that  the basic logic $\cL^{(\neg)}$ characterises a  notion weaker than  $\epsilon$-(bi)simulation, called \emph{a priori $\epsilon$-(bi)simulation}.  Interestingly, we have proven this notion NP-difficult to decide. Further work includes relaxing what is called  probabilistic bisimulation and studying the associated distances; implementing our third proposal of algorithm to compute $d$, using a modification of the algorithm of \cite{zhang2007flow}; investigating further the weaknesses and strengths of the different metrics defined so far.


\bibliography{main}

\bibliographystyle{eptcs}


\end{document}